\documentclass[aps,pra,reprint,groupedaddress]{revtex4-1}
\usepackage{xcolor}
\usepackage{hyperref}
\hypersetup{colorlinks=false,linkbordercolor=red,linkcolor=green,pdfborderstyle={/S/U/W 1}}
\usepackage[UKenglish]{babel}
\usepackage[]{graphicx, amsfonts, amsmath, amssymb, amstext, latexsym, float, color, hyperref, mathtools}

\newcommand{\be}{\begin{equation}}
\newcommand{\ee}{\end{equation}}

\newcommand{\ket}[1]{|#1\rangle}

\newcommand{\eat}[1]{}
\newcommand{\E}{\mathcal{E}}

\newtheorem{theorem}{Theorem}

\newtheorem{conjecture}[theorem]{Conjecture}

\newtheorem{lemma}[theorem]{Lemma}

\newenvironment{proof}[1][Proof]{\noindent\textbf{#1.} }{\ \hfill\rule{0.5em}{0.5em}}

\begin{document}

\title{On capacity of optical communications over a lossy bosonic channel with a receiver employing the most general coherent electro-optic feedback control}

\author{Hye Won Chung}
\thanks{Email of corresponding author: hyechung@umich.edu. This paper was presented in part at the 2011 IEEE International Symposium on Information Theory (ISIT)~\cite{chung2011capacityISIT} and the IEEE 49th Annual Allerton Conference on Communication, Control, and Computing (Allerton)~\cite{chung2011capacity}.}
\author{Saikat Guha$^{\dagger}$}
\author{Lizhong Zheng$^\circ$}
\affiliation{$^*$EECS Department, University of Michigan, 1301 Beal Avenue, Ann Arbor, MI USA 48109\\
$^\dagger$Quantum Information Processing group, Raytheon BBN Technologies, 10 Moulton Street, Cambridge, MA USA 02138 \\
$^\circ$ EECS Department, MIT, 77 Massachusetts Avenue, Cambridge, MA USA 02139}


\begin{abstract}
We study the problem of designing optical receivers to discriminate between multiple coherent states using {\em coherent processing} receivers---i.e., one that uses arbitrary coherent feedback control and quantum-noise-limited direct detection---which was shown by Dolinar to achieve the minimum error probability in discriminating any two coherent states. We first derive and re-interpret Dolinar's binary-hypothesis minimum-probability-of-error receiver as the one that optimizes the information efficiency at each time instant, based on recursive Bayesian updates within the receiver. Using this viewpoint, we propose a natural generalization of Dolinar's receiver design to discriminate $M$ coherent states each of which could now be a codeword, i.e., a sequence of $N$ coherent states each drawn from a modulation alphabet. We analyze the channel capacity of the pure-loss optical channel with a general coherent-processing receiver in the low-photon number regime and compare it with the capacity achievable with direct detection and the Holevo limit (achieving the latter would require a quantum joint-detection receiver). We show compelling evidence that despite the optimal performance of Dolinar's receiver for the binary coherent-state hypothesis test (either in error probability or mutual information), the asymptotic communication rate achievable by such a coherent-processing receiver is only as good as direct detection. This suggests that in the infinitely-long codeword limit, all potential benefits of coherent processing at the receiver can be obtained by designing a good code and direct detection, with no feedback within the receiver.
\end{abstract}

\keywords{Quantum hypothesis testing, coherent detection, coded transmission}
\pacs{03.67.Hk, 03.67.Pp, 04.62.+v}

\maketitle

Over time $t\in [0, T)$, consider a {\it coherent-state} input of constant amplitude ${S}$ to a pure-loss optical channel, where $S \in \mathbb{C}$, and $|S|^2T$ is the mean photon number. Coherent state is the quantum description of light generated by an ideal laser. In a noise-free environment, if one uses an ideal quantum-noise-limited photon counter to receive this optical signal, the output of the photon counter is a Poisson point process, with rate $\lambda=|S|^2$ over the time period $[0, T)$, indicating arrivals of individual photons. Clearly, one can generalize from a constant input to an arbitrary temporal-mode shape of the coherent-state pulse $S(t)$, $t\in [0, T)$, which if detected with an ideal photon counter would result in a non-homogeneous Poisson process of rate $\lambda(t) = |S(t)|^2$. The mean number of photons, $\int_0^T |S(t)|^2 dt$, expended in the transmitted pulse, is the natural metric quantifying communication cost. A photon counter with sub-unity detection efficiency $\eta \in (0, 1]$ can be modeled as a lossy channel of transmissivity $\eta$ followed by ideal photon counting. Further, a coherent state at the input of a lossy channel appears as a coherent state at the output of the channel with its amplitude scaled by the channel's transmissivity $\eta \in (0, 1]$. Therefore, without loss of generality, in this paper we will assume a lossless channel and unity-efficiency photodetection, with an implicit scaling of any constraint imposed on the transmitted mean photon number per mode for all the communication-rate calculations. Receivers that are based on counting photons, i.e., detecting the intensity of the optical signals, are called direct-detection receivers, and the resulting communication channel when coherent states are used for input modulation, is called a Poisson channel. The capacity of the Poisson channel has been well studied~\cite{Shamai90,Wyner88_1,wang2014refined}.

Since a coherent-state optical signal can be described by a complex amplitude $S$, it is of interest to design coherent receivers that measure the phase of $S$, and thus allow information to be modulated on the phase. The standard optical receivers that can detect the phase of the input coherent state are homodyne and heterodyne detection receivers, which mix the received coherent state with a strong coherent-state local oscillator (at the same carrier frequency as the input for homodyne, and at a slight carrier-frequency offset for heterodyne) on a 50-50 beamsplitter and detect the two outputs of the beamsplitter by a pair of linear-mode photodetectors followed by integrating the difference of their output photocurrents. However, we will consider the following lesser-known receiver architecture to detect the phase of an optical signal, proposed by Kennedy (see Figure \ref{fig:dolinar}). 

Instead of directly feeding the input coherent state of complex amplitude $S$ into the photon counter, Kennedy's receiver mixes the input signal with a fixed-amplitude strong coherent-state local oscillator of amplitude $l/\sqrt{1-\gamma}$ on a highly transmissive beamsplitter (of transmissivity $\gamma \approx 1$), and detects the output of the beamsplitter, which is a coherent state of amplitude $S+l$, with an ideal photon detector. The output of the photon counter therefore is a Poisson process with rate $|S+l|^2$. In principle, $l$ can be chosen as an arbitrary complex number, with any desired phase difference from the input signal $S$. Thus, the output of this processing can be used to extract phase information in the input. In a sense, the local control signal is designed to control the channel through which the optical signal $S$ is observed. 

Kennedy used this architecture to distinguish between binary coherent-state hypotheses, i.e., two candidate coherent-state temporal waveforms $S_0(t), S_1(t), t\in [0,T)$, with prior probabilities $\pi_0, \pi_1$, respectively, using a control signal whose complex amplitude $l$ was held constant in $[0,T)$. This was later generalized by Dolinar \cite{DOL73}, who used a time-varying control waveform $l(t), t\in [0,T)$, which flip-flopped between two pre-determined waveforms $l_0(t)$ and $l_1(t)$ adaptively at each photon arrival instant at the detector. Dolinar showed that the local signal waveforms $l_0(t)$ and $l_1(t)$ can be designed in a way, such that the resulting average probability of error for the aforesaid binary hypothesis test is given by:

\be
\label{eqn:binary_optimal}
P_e= \frac{1}{2}\left( 1-\sqrt{1-4\pi_0\pi_1 e^{-\int_0^T |S_0(t)-S_1(t)|^2dt}}  \right).
\ee

Rather surprisingly, this error probability {\em exactly} coincides with the minimum average error probability for discriminating the two coherent-state waveforms with {\em any} measurement allowed by quantum mechanics, which we will refer to as the Yuen-Kennedy-Lax (YKL) limit \cite{YKL1975, Hel76}. The optimality of Dolinar's receiver is an amazing result, as it shows that the minimum-probability-of-error quantum measurement for the binary coherent-state hypothesis test problem can be implemented with the very simple receiver structure shown in Figure \ref{fig:dolinar}, whose functioning can be described completely in terms of semi-classical (shot-noise) theory of photo detection. Unfortunately, this does not generalize to problems involving discrimination of more than two coherent states, where it appears that the receiver must employ truly non-classical effects in order to exactly attain the YKL limit~\cite{Silva2013}.

The goal of this paper is twofold. We are interested in finding a natural generalization of Dolinar's receiver to general hypothesis testing problems with more than two possible signals. In addition, we also consider using such receivers to receive coded transmissions, and thus compute the asymptotic information rate that can be reliably carried through the optical channel. Our investigation will be specifically tied to structure of the receiver front-end shown in Figure \ref{fig:dolinar}, where we will allow the control signal to be varied arbitrarily over the entire received modulated codeword. In Section~\ref{sec:binary}, we will begin by re-deriving Dolinar's design of the optimal control waveform $l(t)$ for the binary case using a method different from Dolinar's, in order to motivate our more general approach. In Section~\ref{sec:gen_M}, we will discuss the performance of the Dolinar receiver front end to discriminate $M > 2$ coherent states, when the time-incremental optimization of a class of R{\' e}nyi information metrics is used to design the local control signal. In Section~\ref{sec:coded}, we consider the performance of this receiver for optimizing the asymptotic information communication rate, and prove the following no-go theorem. The Kennedy-Dolinar receiver acting directly on the received codeword, where the control signal is kept constant over each modulation symbol but is allowed to vary across the $N$ symbols in a codeword, can perform no better than an direct-detection receiver with no internal feedback, in the limit of $N \to \infty$. We conjecture that even if we were to allow the coherent-state codeword to be processed by an arbitrary passive linear-optical mixer prior to feeding it into the Dolinar receiver, and the control signal to be varied arbitrarily over the entire time duration of that processed codeword, the result of our no-go theorem would still apply. We however leave open the proof of this fully general result. If this conjectured result were true, it would imply that when the benefit of coding is available, that local coherent feedback within the receiver does not help increase communication rate, thereby suggesting that truly non-classical joint optical processing and detection of the codeword---not describable by the semi-classical theory of photo-detection---would be needed to attain the ultimate (Holevo) limit~\cite{holevo1998quantum} of optical communications capacity. We conclude the paper in Section~\ref{sec:conclusion}.

\begin{figure}
\centering
\includegraphics[scale=0.5]{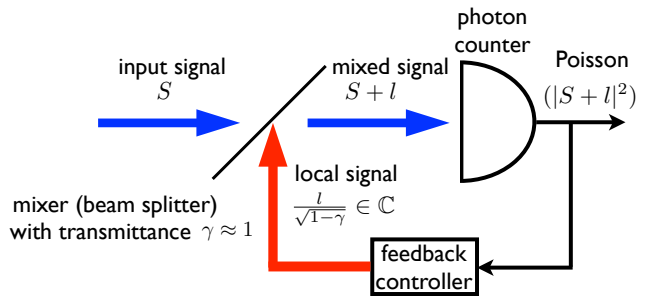}
\caption{Coherent receiver using local feedback signal.}
\label{fig:dolinar}
\end{figure}
\section{Binary Hypothesis Testing}
\label{sec:binary}

Let us consider the binary hypothesis testing problem with two candidate coherent state signals, $\{{S_0(t)}, {S_1(t)}\}$, $t \in [0,T)$ under hypotheses $H=0, 1$, respectively, and denote $\pi_0(t)$ and $\pi_1(t)$ as the posterior distributions over the two hypotheses, conditioned on the output of the photon counter up to time $t$. We assume that $S_0(t), S_1(t) \in \mathbb{R}$. This simplifying assumption accrues no loss of generality for the binary case since we can always choose an axis in the phase space passing through two complex-valued input signals and call that as the `real' axis. 
Based on the receiver's knowledge of the posterior probabilities $\pi_0(t)$ and $\pi_1(t)$ at time $t$, it chooses the control signal $l(t)$ (based on optimizing an incremental information metric to be described shortly) whose value is held constant over the infinitesimal interval $[t, t+\Delta)$. After observing the output of the photon counter during this infinitesimal interval, i.e., based on whether a click appears or not, the receiver updates the posterior probabilities of the hypotheses to obtain $\pi_0(t+\Delta)$ and $\pi_1(t+\Delta)$, and then follows the above procedure again to choose the control signal over the next infinitesimal interval, and so on. 
In the following, we will focus on solving the single step optimization of $l$ (at time $t$) in the above described recursive procedure, and will drop the dependency on $t$ to simplify the notation. 

We first observe that the optimal value of $l$ must be real, as having a non-zero imaginary part in $l$ simply adds a constant rate to the two candidate Poisson point processes (corresponding to the two hypotheses), which cannot improve the quality of observation. When we write $\lambda_i = (S_i +l)^2, i=0,1$ to denote the rate of the resulting Poisson processes, the number of photon arrivals at the output of photon counter during the interval $\Delta$ follows the Poisson distribution
\be
\begin{split}
&\Pr(k \text{ photon arrivals in } \Delta \text{ interval}|H=i)\\
&=\frac{(\lambda_i\Delta)^ke^{-\lambda_i\Delta}}{k!},
\end{split}
\ee
conditioned on which hypothesis ($H = 0, 1$) is true.
Over a very short period of time, i.e., when $\Delta\to 0$, under either hypothesis, the realized Poisson process generates with a high probability either $0$ or $1$ photon arrival, with probabilities $e^{-\lambda_i \Delta}$ and $1-e^{-\lambda_i \Delta}$, respectively~\footnote{One has to be careful in using the binary-output channel as an approximation of the Poisson channel. As we are optimizing over the control signal, it is not obvious that the resulting $\lambda_i$'s are bounded. In other words, the mean of the Poisson distributions, $\lambda_i \Delta$, might not be small. The assumption of either $0$ or $1$ arrival, and the approximation in the corresponding probabilities, can be justified as follows. First, a single photon detector is much more practical, given the current state of technology, that a fully number-resolving high bandwidth photon counter. A single photon detector can sense whether or not any number of photons arrives during a time interval $\Delta$, but cannot count the number of photon arrivals, especially as $\Delta\to0$. So, the binary-output channel model is much more practical than the Poisson-output channel model. Second, when we want to maximize the ability to distinguish between two hypotheses $H=0, 1$, we essentially need to distinguish between the signal amplitudes $S_0$ and $S_1$ using photon arrival events. Adding a feedback control signal $l\to\infty$ does not help in distinguishing $S_0$ and $S_1$. In this sense, we can reason that the optimal $l$ should not make $\lambda_i$ unbounded.}\label{foot:binary}. Over this short period of time, the receiver front end induces a binary-input binary-output channel as shown in Figure \ref{fig:short}, whose parameters depend upon the value of the control signal $l$. Our goal is to pick an $l$ for each short interval such that they contribute to the overall decision in the best possible manner.

The difficulty here is that it is not obvious how we should quantify the contribution of the observation over a short period of time to the performance of the overall decision. Let us consider the intuitive approach where we choose the $l$ that maximizes the mutual information over the induced binary channel at each incremental time step. For convenience, we write the input to the channel as $H\in\{0,1\}$ and the output of the channel as $Y \in \{0,1\}$, indicating either $0$ or $1$ photon arrival. The mutual information between $H$ and $Y$ is given by
\be\label{eqn:IHY1}
I(H;Y)=\sum_{h=0}^1 \pi_h \left(\sum_{y=0}^1 \ln\frac{P_{Y|H}(y|h)}{\left(\sum_{h'=0}^1\pi_{h'}P_{Y|H}(y|h')\right)}\right)
\ee
where $\{\pi_0,\pi_1\}$ are input probabilities and $P_{Y|H}(y|h)$ is the channel distribution. 
The following result gives the solution to this optimization problem of finding the control signal $l^*$ that maximizes $I(H;Y)$.

\begin{lemma}\label{lem:opt_feed}
{\it The optimal choice maximizing the mutual information $I(H;Y)$ in~\eqref{eqn:IHY1} for the effective binary channel is:
\be
\label{eqn:BSCoptimal}
l^* = \frac{S_0 \pi_0 - S_1 \pi_1}{\pi_1-\pi_0}.
\ee
With this choice of the control signal, the following relation holds:
\be
\label{eqn:balance1}
\pi_0 \sqrt{\lambda_0}  =  \pi_1 \sqrt{\lambda_1} .
\ee}
\end{lemma}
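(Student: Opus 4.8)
The plan is to reduce the optimization to a one-variable calculus problem by working to leading order in the infinitesimal interval length $\Delta$. First I would make the induced binary channel explicit: under hypothesis $H=i$ the no-click and click probabilities are $P_{Y|H}(0|i)=e^{-\lambda_i\Delta}$ and $P_{Y|H}(1|i)=1-e^{-\lambda_i\Delta}$, with $\lambda_i=(S_i+l)^2$. Inserting the small-$\Delta$ forms $P_{Y|H}(1|i)\approx\lambda_i\Delta$ and $P_{Y|H}(0|i)\approx 1-\lambda_i\Delta$ into~\eqref{eqn:IHY1}, I would expand $I(H;Y)$ in powers of $\Delta$. The crux of this step is that the no-click ($y=0$) contributions vanish at order $\Delta$: each equals $\pi_h(\bar\lambda-\lambda_h)\Delta$ to leading order, and these cancel because $\sum_h\pi_h(\bar\lambda-\lambda_h)=0$. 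Hence the mutual information is carried, to leading order, entirely by the click term, giving $I(H;Y)\approx\Delta\,f(l)$ with
\be
f(l)=\pi_0\lambda_0\ln\frac{\lambda_0}{\bar\lambda}+\pi_1\lambda_1\ln\frac{\lambda_1}{\bar\lambda},\qquad\bar\lambda=\pi_0\lambda_0+\pi_1\lambda_1 .
\ee
Maximizing $I(H;Y)$ over $l$ is therefore equivalent to maximizing the $\Delta$-independent function $f$.

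Next I would differentiate. Using $d\lambda_i/dl=2(S_i+l)$ together with $d\bar\lambda/dl=\pi_0\,d\lambda_0/dl+\pi_1\,d\lambda_1/dl$, the additive constants cancel and the $\ln\bar\lambda$ contributions combine, leaving the compact stationarity condition
\be
f'(l)=2\pi_0(S_0+l)\ln\frac{\lambda_0}{\bar\lambda}+2\pi_1(S_1+l)\ln\frac{\lambda_1}{\bar\lambda}=0 .
\ee
This equation is transcendental in $l$, and attacking it head-on is the main obstacle. The device that unlocks the closed form is the balance relation~\eqref{eqn:balance1}: positing $\pi_0(S_0+l)=\pi_1(S_1+l)$---a single linear equation---and using $\pi_0+\pi_1=1$ gives $\lambda_0\lambda_1=\bar\lambda^2$, so that $\ln(\lambda_1/\bar\lambda)=-\ln(\lambda_0/\bar\lambda)$. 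The two terms of $f'$ then have equal magnitude and opposite sign and cancel, so the posited relation indeed produces a stationary point. Solving the linear equation yields $l^*=(S_0\pi_0-S_1\pi_1)/(\pi_1-\pi_0)$, which is~\eqref{eqn:BSCoptimal}; at this value $S_0+l^*$ and $S_1+l^*$ have the same sign, so $\pi_0(S_0+l)=\pi_1(S_1+l)$ coincides with $\pi_0\sqrt{\lambda_0}=\pi_1\sqrt{\lambda_1}$, recovering~\eqref{eqn:balance1}.

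Finally I would confirm that this stationary point is the maximizer and not a spurious root. Because $x\mapsto x\ln x$ is convex, Jensen's inequality gives $f(l)\ge 0$, with equality exactly when $\lambda_0=\lambda_1$; and $f(l)\to 0$ as $l\to\pm\infty$ since $\lambda_0/\lambda_1\to 1$ there. The only competing stationary point is $l=-(S_0+S_1)/2$, where $\lambda_0=\lambda_1$ forces $f=0$ (the global minimum). Thus $l^*$, where $f>0$, is the global maximum, completing the argument. I expect the two delicate points to be (i) establishing the order-$\Delta$ cancellation of the no-click terms carefully enough to justify dropping everything but $f$, and (ii) spotting the balance ansatz as the substitution that collapses the transcendental stationarity condition---without it, $f'(l)=0$ has no evident elementary solution.
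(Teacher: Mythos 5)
Your small-$\Delta$ expansion of $I(H;Y)$ agrees with the paper's (their expansion reduces the problem to maximizing exactly your $f(l)$), and your stationarity analysis is sound: the derivative $f'(l)=2\pi_0(S_0+l)\ln(\lambda_0/\bar\lambda)+2\pi_1(S_1+l)\ln(\lambda_1/\bar\lambda)$ is correct, the balance ansatz $\pi_0(S_0+l)=\pi_1(S_1+l)$ does force $\bar\lambda=\sqrt{\lambda_0\lambda_1}$ and hence $f'(l^*)=0$, and solving the linear equation gives~\eqref{eqn:BSCoptimal}. The genuine gap is in your last step, where you argue $l^*$ is the \emph{global} maximizer. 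Your claim that $f(l)\to 0$ as $l\to\pm\infty$ is false: although $\lambda_0/\lambda_1\to 1$, the difference $\lambda_0-\lambda_1=(S_0-S_1)(S_0+S_1+2l)$ grows linearly in $l$, and the second-order (chi-square) approximation $f\approx \Var(\lambda)/(2\bar\lambda)$ gives
\be
\lim_{l\to\pm\infty} f(l) \;=\; 2\,\pi_0\pi_1\,(S_0-S_1)^2 \;>\; 0,
\ee
i.e., an arbitrarily strong local oscillator still extracts information at a strictly positive (homodyne-like) rate. With this limit positive, your comparison collapses: you would now have to prove (i) that $f(l^*)=\frac{\pi_0\pi_1(S_0-S_1)^2}{\pi_1-\pi_0}\ln\frac{\pi_1}{\pi_0}$ exceeds $2\pi_0\pi_1(S_0-S_1)^2$, which amounts to $\frac{1}{x}\ln\frac{1+x}{1-x}>2$ for $x=\pi_1-\pi_0\neq 0$ (true, but needs an argument), and (ii) more seriously, that the transcendental equation $f'(l)=0$ has no stationary points besides $l^*$ and the minimum at $l=-(S_0+S_1)/2$. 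You assert this uniqueness without proof, and nothing in your argument excludes further critical points with values above $f(l^*)$.

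The paper's proof avoids this global analysis altogether, and that is the cleanest repair. After computing $f(l^*)\Delta$, it invokes the accessible-information upper bound for a binary pure-state ensemble (Holevo; Sohma--Hirota): for \emph{any} single-symbol measurement, $I(H;Y)\le H_{\sf B}(\pi_0)-H_{\sf B}(P_e^*)$, where $P_e^*$ is the Helstrom/YKL error probability. Expanding this bound to first order in $\Delta$ shows it equals $f(l^*)\Delta+O(\Delta^2)$. Since every choice of $l$ followed by photon counting is one particular single-symbol measurement, the value attained at $l^*$ meets a bound valid for all $l$ simultaneously, and optimality follows with no need to classify stationary points. So either carry out steps (i) and (ii) above, or replace your asymptotic comparison with this upper-bound argument.
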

\begin{proof}
Appendix \ref{sec:pf_lem_opt_feed}
\end{proof}

\begin{figure}
\centering
\includegraphics[scale=0.9]{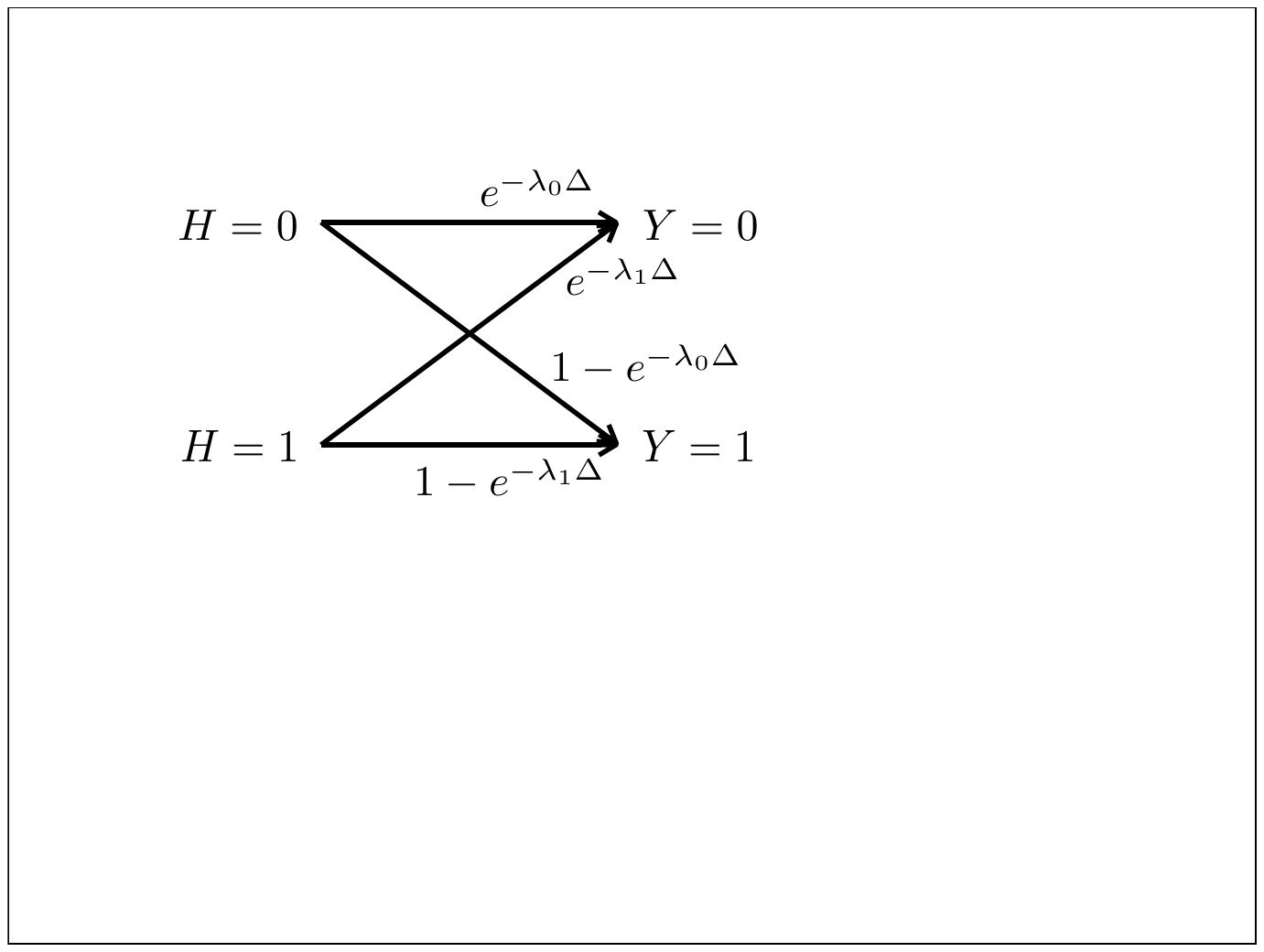}
\caption{Effective binary channel between input hypothesis $H\in\{0,1\}$ and output of the photon counter $Y\in\{0,1\}$, indicating either 0 or 1 photon arrival over an infinitesimal time interval of length $\Delta$.}
\label{fig:short}
\end{figure}

The relation in~\eqref{eqn:balance1} lends some useful insights. If $\pi_0 > \pi_1$, we have $\lambda_1 > \lambda_0$, and vice versa. That is, by switching the sign of the control signal $l$, we always make the Poisson rate corresponding to the hypothesis with the higher probability smaller. In the short interval where this control is applied, with a high probability we would observe no photon arrival, in which case we would confirm the more likely hypothesis. For a very small value of $\Delta$, this occurs with a dominating probability, such that the posterior distribution changes only by a very small amount.
On the other hand, when there is a photon arrival, i.e., $Y=1$, we would be quite surprised, and the posterior distribution of the hypotheses moves away significantly from the prior. Considering this latter case, the updated distribution over the hypotheses can be written as:
\be
\frac{\Pr(H=1|Y=1)}{\Pr(H=0|Y=1)} = \frac{\pi_1 \cdot \lambda_1 \Delta}{\pi_0 \cdot \lambda_0 \Delta} +O(\Delta)= \frac{\pi_0}{\pi_1}+O(\Delta).
\ee
The posterior distributions under $0$ or $1$ photon arrival turn out to be inverse of one another in the $\Delta\to0$ limit. In other words, the larger one of the two probabilities $\pi_0(t)$ and $\pi_1(t)$ remains the same no matter if there is an arrival in the interval or not. As we apply such optimal control signals recursively, this larger value smoothly progresses towards $1$ at a predictable rate in $t \in [0,T)$, regardless of when and how many photon arrivals were actually observed. In other words, {\it the random photon arrivals only affect the decision on which is the more likely hypothesis, but do not affect the quality of this decision.} The following lemma describes this recursive control signal and the resulting receiver performance. Without loss of generality, we assume that at $t=0$, the prior distribution satisfies $\pi_0 \geq \pi_1$. Also we let $N(t)$ denote the number of photon arrivals observed in the interval $[0,t)$.

\begin{lemma}
{\it \label{lemma:binaryoptimality}
Let $g(t)$ satisfy $g(0)=\pi_0/\pi_1$ and
\be
g(t) = g(0) \, \exp\left[ \int_0^t \frac{(S_0(t)-S_1(t))^2 (g(\tau)+1)}{g(\tau)-1} d\tau\right].
\ee
The recursive mutual-information-maximization procedure described above yields a control signal 
\be
l^*(t) = \left\{ 
\begin{array}{ll}
l_0(t) & \quad \mbox{if } N(t) \mbox{ is even}\\
l_1(t) & \quad \mbox{if } N(t) \mbox{ is odd}
\end{array}
\right.
\ee
where,
\be
l_0(t) = \frac{S_1(t)- S_0(t) g(t)}{g(t)-1}, \quad
l_1(t) = \frac{S_0(t)- S_1(t) g(t)}{g(t)-1}.
\ee
Furthermore, at time $T$, the decision of the hypothesis testing problem is $\widehat{H}=0$ if $N(T)$ is even, and $\widehat{H}=1$ otherwise. The resulting probability of error coincides with~\eqref{eqn:binary_optimal}. }
\end{lemma}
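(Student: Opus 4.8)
The plan is to track a single scalar---the posterior odds in favour of the currently more likely hypothesis---and to show that under the optimal control of Lemma~\ref{lem:opt_feed} this quantity evolves \emph{deterministically}, independent of the random photon-arrival record; everything else is then bookkeeping. Assuming $\pi_0\ge\pi_1$ at $t=0$, I would define $g(t)=\max(\pi_0(t),\pi_1(t))/\min(\pi_0(t),\pi_1(t))\ge 1$ and argue that the identity of the larger posterior is recorded exactly by the parity of $N(t)$: it starts as $H=0$ and switches at each arrival. Parametrising the posterior by $g(t)$ and this parity, when $N(t)$ is even one has $\pi_0/\pi_1=g$, and substituting into the optimal control $l^*=(S_0\pi_0-S_1\pi_1)/(\pi_1-\pi_0)$ of Lemma~\ref{lem:opt_feed} yields exactly $l^*=l_0(t)=(S_1-S_0 g)/(g-1)$; the odd case gives $l_1(t)$ by the same substitution with the labels $0,1$ interchanged. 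This identifies the flip-flop control asserted in the statement.

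Next I would establish the deterministic evolution of $g$. With $l=l_0$ a short computation gives $S_0+l_0=(S_1-S_0)/(g-1)$ and $S_1+l_0=g(S_1-S_0)/(g-1)$, hence $\lambda_0=(S_0-S_1)^2/(g-1)^2$ and $\lambda_1=g^2(S_0-S_1)^2/(g-1)^2$, so that $\lambda_1/\lambda_0=g^2$. Two facts then drop out, each depending on $g$ and $(S_0-S_1)^2$ alone. First, in a no-arrival subinterval the Bayes update multiplies the odds by $e^{(\lambda_1-\lambda_0)\Delta}$, and $\lambda_1-\lambda_0=(S_0-S_1)^2(g+1)/(g-1)$, giving the deterministic drift $\tfrac{d}{dt}\ln g=(S_0-S_1)^2(g+1)/(g-1)$. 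Second, at an arrival the odds get multiplied by $\lambda_0/\lambda_1=1/g^2$, so $\pi_0/\pi_1$ jumps from $g$ to $1/g$: the magnitude of the odds is preserved while the more likely hypothesis switches---this is precisely the balance relation~\eqref{eqn:balance1} rearranged. Thus arrivals only toggle the parity and never change $g$, so $g(t)$ obeys the stated integral equation with $g(0)=\pi_0/\pi_1$, regardless of the arrival pattern.

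Finally I would compute the error probability. Because the decision picks the more likely hypothesis at time $T$, the error probability conditioned on the entire arrival record equals the smaller posterior, $\min(\pi_0(T),\pi_1(T))=1/(g(T)+1)$, which is deterministic; hence $P_e=1/(g(T)+1)=:p(T)$. To evaluate $p(T)$ I would avoid solving the $g$-equation directly and instead track $\phi(t)=p(t)(1-p(t))$. Writing $g=(1-p)/p$ turns the drift into $\dot p=-p(1-p)(S_0-S_1)^2/(1-2p)$, whence $\dot\phi=\dot p\,(1-2p)=-(S_0-S_1)^2\phi$, a linear equation. With $p(0)=\pi_1$, so $\phi(0)=\pi_0\pi_1$, this integrates to $\phi(T)=\pi_0\pi_1\,e^{-\int_0^T (S_0-S_1)^2\,dt}$. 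Solving $P_e(1-P_e)=\pi_0\pi_1\,e^{-\int_0^T|S_0-S_1|^2\,dt}$ for the root $P_e\le\tfrac12$ reproduces~\eqref{eqn:binary_optimal} exactly.

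The delicate point is not any of these algebraic identities but the passage to continuous time: the updates above are the $O(\Delta)$ truncations of the exact Bayes recursion, and I would need to show that the accumulated $O(\Delta)$ errors vanish as $\Delta\to0$, so that $g(t)$ converges to the solution of the integral equation and is genuinely path-independent. I would also check well-posedness of that equation near $g=1$, where the drift is singular (relevant only in the degenerate case $\pi_0=\pi_1$) yet the integral pushes $g$ away from $1$ immediately. Establishing this limit rigorously---rather than merely verifying the self-consistency of the claimed solution---is the main obstacle.
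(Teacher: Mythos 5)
Your proposal is correct, and its core is the same as the paper's proof: you parametrize the posterior by the odds ratio $g(t)=\max(\pi_0(t),\pi_1(t))/\min(\pi_0(t),\pi_1(t))$ together with the parity of $N(t)$, substitute into Lemma~\ref{lem:opt_feed} to get the flip-flop control $l_0(t),l_1(t)$, show that a no-arrival interval produces the deterministic drift $\tfrac{d}{dt}\ln g=(S_0-S_1)^2(g+1)/(g-1)$ while an arrival merely inverts the odds (your observation that this is the balance relation~\eqref{eqn:balance1} in disguise is exactly right), and conclude $P_e=1/(1+g(T))$. Where you genuinely depart from the paper is the endgame: the paper solves the nonlinear integral equation for $g(t)$ explicitly, obtaining a rather unwieldy closed form involving $\sqrt{(1+g(0))^2e^{2m(t)}-4g(0)e^{m(t)}}$ with $m(t)=\int_0^t(S_0-S_1)^2d\tau$, and only then computes $P_e=1/(1+g(T))$; you instead pass to $p=1/(1+g)$ and the auxiliary quantity $\phi=p(1-p)$, which satisfies the \emph{linear} equation $\dot\phi=-(S_0-S_1)^2\phi$, so that $P_e(1-P_e)=\pi_0\pi_1e^{-m(T)}$ falls out immediately and~\eqref{eqn:binary_optimal} is just the root $\le\tfrac12$ of a quadratic. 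Your substitution is cleaner and makes the appearance of the YKL expression transparent, at no loss of content. Finally, the $\Delta\to0$ passage you flag as the "main obstacle" is treated no more rigorously in the paper than in your sketch---the paper also moves from the discrete Bayes recursion to the integral equation informally---so on that point you match, rather than fall short of, the paper's own standard.
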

\begin{proof}
Appendix \ref{sec:pf_lem_opt_feed1}
\end{proof}

Figure \ref{fig:samplepath} shows an example of the optimal control signal. The plot is for a case where $S_i(t)$'s are constant on-off-keying waveforms; i.e., $S_0(t) = 0$ and $S_1(t) = S$ $\forall t \in [0, T)$. As shown in the plot, the control signal $l(t)$ jumps between two prescribed curves, $l_0(t), l_1(t)$, corresponding to the cases $\pi_0(t)> \pi_1(t)$ and $\pi_0(t)< \pi_1(t)$, respectively. With the optimal choice of the control signal, at each instant of a photon arrival, the receiver is maximally surprised and it flips its choice of the hypothesis $\widehat{H}$. However, $g(t) = \max\{\pi_0(t), \pi_1(t)\}/\min\{\pi_0(t), \pi_1(t)\}$, indicating how much the receiver is committed to the more likely hypothesis, increases at a steady rate regardless of the actual arrival events. 

\begin{figure}
\centering
\includegraphics[scale=0.65]{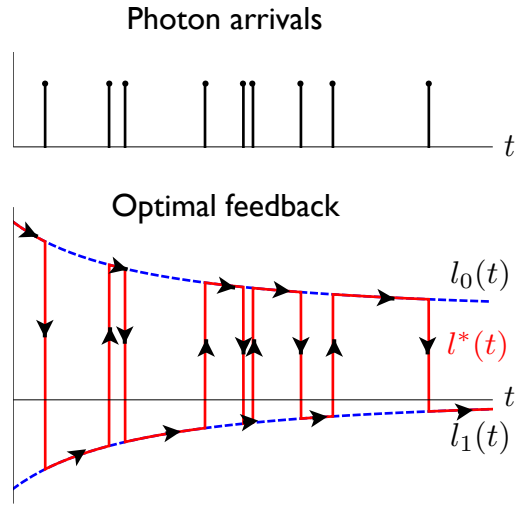}
\caption{An example of the control signal $l^*(t)$, which jumps between two pre-determined waveforms $l_0(t)$ and $l_1(t)$ adaptively at each photon arrival instant at the detector. This control signal achieves the minimum probability of error for binary hypothesis testing for discriminating on-off keying coherent-state signals.}
\label{fig:samplepath}
\end{figure}

Before we go on to the more general $M$-ary setting, a few comments are in order. Takeoka generalized Dolinar's original result---which was derived specifically for optimally discriminating between two coherent states---to show that the receiver front end shown in Figure \ref{fig:dolinar} can actually realize an arbitrary binary projective measurement on an arbitrary set of (one of two) input states~\cite{Takeoka2007}. Takeoka posed the problem of minimum-error discrimination of two non-orthogonal states as the (zero-error) discrimination of two mutually orthogonal states that correspond to the YKL measurement projectors. He chose the control signals in such a way that if the receiver is fed with one of these two orthogonal states, that at every incremental time step in $[0, T)$, the conditional states under the two hypotheses remain orthogonal. Takeoka's construction proved a special case of an earlier result by Walgate {\em et al.}~\cite{Walgate2000} which states that when presented with many copies of one of two pure states, there always exists a sequence of projective measurements that act on each copy individually while feeding forward the measurement result towards determing the measurement to be performed on the next copy---also termed {\em local operations and classical communications} (LOCC)---which can attain the quantum minimum error probability in choosing between the two hypotheses, and in turn also satisfy the aforesaid condition of incremental orthogonality of YKL projectors as one progresses through the copies that Takeoka's construction guarantees. Given Walgate {\em et al.}'s result on an LOCC strategy being always optimal for binary multi-copy pure state discrimination, the fact that Dolinar's receiver exactly attains the YKL limit is not so surprising in hindsight. In the same paper~\cite{Walgate2000}, Walgate {\em et al.} argue that for $M$-ary hypothesis testing, an LOCC strategy is not always globally optimal. Even though this does not imply that the Kennedy-Dolinar receiver front end will {\em not} attain the YKL limit of $M$-ary coherent state discrimination, it is highly indicative of that being so. 

Finally, it is well known that for an ensemble of $M=2$ pure states, the measurement that minimizes the error probability (i.e., attains the YKL conditions) is the same as the measurement that maximizes the mutual information (or, accessible information), and is a $2$-output projective measurement. Hence, it is not surprising that our derivation of the control signal $l^*(t)$, which was based on maximizing the incremental mutual information, results in the same answer as what Dolinar derived. It is worth noting however that for $M>2$ pure states, the YKL measurement---which is an $M$-output projective measurement---is in general {\em different} from the one that maximizes the accessible information, which in general is $d$-output measurement described by {\em positive operator valued measure} (POVM) operators with $M \le d \le M(M+1)/2$.

\section{Generalization to $M$-ary Hypothesis Testing}\label{sec:gen_M}

Our success in interpreting the binary hypothesis testing problem as an incremental maximization of mutual information gives us useful insights on designing a general communications receiver. Regardless of the physical channel that one communicates over, one can always contemplate designing a receiver that builds up a ``slow motion" understanding of the received signal by studying how the posterior distribution over the messages evolves over time (during the demodulation and decoding of the modulated message). This evolving posterior distribution, conditioned on more and more observations at the receiver, would be expected to drive the uniform prior towards an eventual deterministic distribution, thus allowing the receiver to ``lock in" on a particular message. This viewpoint is more general than the conventional setup in information theory, and is particularly useful in understanding dynamic problems, as it is not based on any notion of sufficient statistics, block codes, or any predefined notions of reliability. As we measure how far the posterior distribution moves at each time instant, we can quantify how the communication transmission and reception process at each time instant contributes to the overall decision making.

The optimality result in Lemma \ref{lemma:binaryoptimality} is, however, difficult to duplicate for general $M$-ary problems. We can of course always mimic the procedure, i.e., choose the control signal that maximizes the incremental mutual information over an $M$-input-binary-output channel at each time instant (binary output corresponding to no photon arrival and one photon arrival in the incremental interval). However, we have found that the resulting control signal does not always give the minimum probability of error. The reason for this is intuitive. There is a fundamental difference between maximizing mutual information and minimizing the probability of error for an ensemble of size $M>2$. A posterior distribution with a lower entropy does not necessarily correspond to a lower probability of error in discriminating the states in the ensemble. These two coincide only for the binary case, since the posterior distribution over two messages lives in a single-dimensional space. In general, the goal of decision making favors the posterior distribution that has a dominating largest element, whereas maximizing mutual information does not impose such a requirement on the posterior and is agnostic to the exact form of the posterior as long as `information' conveyed is maximized.

Consequently, it is hard to define a metric on the efficiency of communication over a small time interval in the middle of a communication session that can precisely measure how well the measurement performed in the interval serves the overall purpose (of choosing between the encoded-modulated messages at a minimum probability of error, for instance). Even if one could define such a metric, it is conceivable that an analytical solution of the optimal control signal by a time-incremental optimization of that metric might be hard. Such an incremental metric, if one exists, should be time-varying, i.e., should be able to adapt itself based upon how much time is left before the decision must be finalized. Intuitively, at an early instant in time (i.e., when a longer time remains before the final decision needs to be made), since the current observation is yet to be combined with many more future observations, the receiver should be more keen to take risk and extract any kind of `information' that is available, and hence it makes sense to maximize mutual information. On the other hand, as the decision deadline approaches, the receiver ought to become progressively more picky in choosing what information to extract from subsequent measurements, and demand only information that helps the receiver lock in to one particular message. Thus, the control signal should be optimized accordingly over the entire duration of receiving the modulated message. 

To test this intuition, we restrict our attention to the family of R\'{e}nyi entropy. R\'{e}nyi entropy of order $\alpha$ of a given distribution $P$ over an alphabet ${\cal X}$ is defined as 
\be
H_\alpha(P) = \frac{1}{1-\alpha} \log \left( \sum_{x \in {\cal X}}P^\alpha(x) \right).
\ee
It is easy to verify that as $\alpha\to 1$, $H_\alpha(P)$ is the Shannon entropy, and as $\alpha\to\infty$,
$H_\infty(P) = -\log \left( \max_{x\in {\cal X}} P(x)\right)$,
which is a measure of the probability of error in guessing $X$, with distribution $P$, since $\hat{X}= \arg\max_x P(x)$. 

Now for general $M$-ary hypothesis testing problems, we consider a recursive design of the control signal $l$ similar to that introduced in Section \ref{sec:binary}, except that at each time instant, rather than maximizing the mutual information over the effective channel, which is equivalent to minimizing the conditional Shannon entropy of the messages, we instead minimize the average R\'{e}nyi-$\alpha$ entropy, i.e., we solve the optimization problem:
\be
\label{eqn:opt_Renyi}
\min_l \sum_{y} P_Y(y) \cdot H_\alpha(P_{H|Y=y}(\cdot)).
\ee

Intuitively, for $\alpha \in [1, \infty)$, as $\alpha$ grows larger, the optimization in~\eqref{eqn:opt_Renyi} tends more in favor of posterior distributions that are concentrated on a single entry. Smaller values of $\alpha$, on the other hand, correspond to being more agnostic to what type of information is obtained as long as the quantity of information being obtained is maximized. A good design should use smaller values of $\alpha$ at the beginning of the communication session and increase $\alpha$ as the decision deadline approaches. We show a numerical example in Figure \ref{fig:numerical} to illustrate this point. We consider discriminating $M=3$ coherent states each with a constant real amplitude, and compare the following two cases: one in which $\alpha = 1$ is held fixed throughout $t \in [0,T]$ and another in which $\alpha = 100$ is held fixed in $t \in [0,T]$. Our intuition says that choosing a smaller $\alpha$ is desirable, when we have enough time to collect information before the final decision. On the other hand, when we need to make a final which-message decision immediately, a larger $\alpha$ is preferable. We observe that using $\alpha=1$ yields better error-probability performance if $T$ is longer, whereas $\alpha = 100$ yields a lower error probability when $T$ is small. 

It will be interesting in future work, to examine the error-probability performance of the Kennedy-Dolinar receiver front end with a control signal designed by using the above incremental R{\' e}nyi-information optimizing technique with an optimal $\alpha(t)$. Moreover, it will be interesting to investigate utilizing a non-coherent-state control signal, for instance a squeezed state.  

\begin{figure}
\centering
\includegraphics[width=\columnwidth]{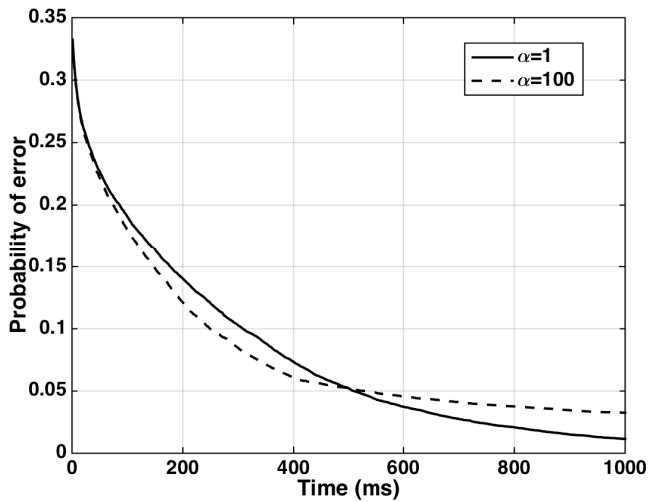}
\caption{Empirical average of detection error probability (after 10,000 Monte Carlo simulations) for ternary hypothesis testing, using control signals that minimize the average R\'{e}nyi $\alpha$-entropy for different values of $\alpha$; Ternary inputs $\{\ket{5},\ket{-6},\ket{3}\}$ are used with prior probabilities $p=\{0.8,0.1,0.1\}$.}
\label{fig:numerical}
\end{figure}

\section{Coded Transmissions and Capacity Results}\label{sec:coded}

Even though the discussion in Section~\ref{sec:gen_M} and the numerical example therein with three coherent states gave us useful insight on optimizing the control signal for hypothesis testing problem, intuition from channel coding tells us that this optimization is a more pertinent question when exponentially many ($M = e^{NR}$) messages are each encoded into a sequence of $N$ coherent states, forming a codebook. Coding-theory intuition further tells us that those $M$ coherent-state sequences, for a good code, should get close to perfectly distinguishable as the codeword length $N$ becomes long, if the rate of the code $R$ is smaller than the capacity $C$, where $C$ is a function of the channel induced by the choice of the optical receiver. In this section, we study the capacity of an optical channel with the Kennedy-Dolinar receiver acting directly on the received codeword, where a control feedback signal in the receiver is chosen to maximize the information rate of the induced channel.


The transmission of an ideal laser-light pulse over a lossy optical channel can be modeled as a pure-state classical quantum channel ${\cal N}_{\eta}: {S} \to |\sqrt{\eta}{S}\rangle$, where ${S} \in {\mathbb C}$ is the complex field amplitude (of the coherent state $|{S}\rangle$) at the input of the channel, $\eta \in (0,1]$ is the transmissivity (the fraction of input power that appears at the output), and $|\sqrt{\eta}{S}\rangle$ is a coherent state at the channel's output.
We are interested in attaining the classical capacity of this channel, i.e., the number of information bits that can be modulated into the optical signals, and reliably decoded with the receiver architecture shown in Figure \ref{fig:dolinar}. Since a coherent state $|S\rangle$ of mean photon number $\E = |{S}|^2$ transforms into another coherent state $|\sqrt{\eta}\,{S}\rangle$ of mean photon number $\eta\E$ over the lossy channel, we will henceforth, without loss of generality, subsume the channel loss in the energy constraint, and pretend that we have a lossless channel $(\eta = 1)$ with a mean-photon-number constraint $\mathbb{E} [|S|^2]\leq \E$ per mode (or per `channel use').

We consider the case where the average number $\E$ of transmitted photons per mode is small, and hence a high photon information efficiency, in bits/photon, is achievable. We are particular interested in analyzing the gap between the capacity with the Kennedy-Dolinar receiver and the Holevo limit, the ultimate achievable capacity with any joint quantum measurement.
At high transmit powers, it is well-known that the Shannon capacity associated with heterodyne detection is close to the Holevo limit. 
In the analysis of the capacity under the mean-photon-number constraint, we will use $o(\cdot)$ and $O(\cdot)$ notations to describe the behavior of functions of the mean photon number $\E$ in the regime of $\E\to 0$. A function described as $o(f(\E))$ and that described as $O(f(\E))$  satisfies
\be
\lim_{\E\to 0}\Bigg|\frac{o(f(\E))}{f(\E)}\Bigg|=0,\quad \limsup_{\E\to 0}\Bigg|\frac{O(f(\E))}{f(\E)}\Bigg|<\infty,
\ee
respectively.

The capacity of the pure-loss ($\eta=1$) optical channel without the constraint in the receiver architecture is studied in \cite{Holevo98, Schumacher97}. It is shown \cite{giovannetti2004classical} that the capacity of the channel (in nats per channel use) is given by 
\be
\label{eqn:holevo}
C_{\sf Holevo} (\E) = (1+\E) \log (1+\E) - \E \log \E,
\ee
where $\E$ is the average number of photons transmitted per channel use. To achieve this data rate, an optimal joint quantum measurement over a long sequence of symbols must be used. In practice, however, such measurement is very hard to implement. We are therefore interested in finding the achievable data rate when a simple receiver structure is adopted. Nevertheless,~\eqref{eqn:holevo} serves as a performance benchmark. In our regime of interest, i.e., $\E\to 0$, it is useful to approximate~\eqref{eqn:holevo} as 
\be
\label{eqn:holevo_approximation}
C_{\sf Holevo}(\E) = \E \log \frac{1}{\E} + \E +o(\E).
\ee

As another performance benchmark, let us consider the Shannon capacity of the channel induced by an ideal direct-detection receiver (no local oscillator mixing or feedback). The capacity of this channel---the Poisson channel---was studied in \cite{Wyner88_1,Shamai90}, and the regime of low average photon numbers was studied in \cite{Ligong08}. For our purposes of performance comparison, we need a more precise scaling law of rate performance, which the following lemma states.

\begin{lemma}[Capacity of Direct Detection]\label{lem:cap_DD}
{\it As $\E\to 0$, the optimal input distribution to the optical channel with a direct-detection receiver is on-off-keying, with 
\be
\ket{S} = \left\{ \begin{array}{ll} \ket{0},&\qquad \mbox{ with prob. }1-p^*, \,{\text{and}}\\
\ket{\sqrt{\E/p^*}}, & \qquad \mbox { with prob. } p^*,
\end{array}\right.
\ee 
where 
$\lim_{\E\to 0} \frac{p^*} {\frac{\E}{2} \log \frac{1}{\E}} = 1$,
and the resulting capacity is 
\be
\label{eqn:DD_approximation}
C_{\sf DD}(\E) = \E\log \frac{1}{\E} - \E \log \log \frac{1}{\E} + O(\E).
\ee}
\end{lemma}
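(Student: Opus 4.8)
The plan is to recast the problem as a single-letter discrete-time Poisson channel and then prove matching achievability and converse bounds, each accurate to $O(\E)$. Ideal direct detection of a coherent state $\ket{S}$ produces a photon count $Y$ that is Poisson with mean $\lambda=|S|^2$, and $Y$ is a sufficient statistic, so the induced channel is $P_{Y|\lambda}=\mathrm{Poisson}(\lambda)$ under the average-power constraint $\mathbb{E}[\lambda]\le\E$, and $C_{\sf DD}(\E)=\sup_{P_\lambda:\,\mathbb{E}[\lambda]\le\E} I(\lambda;Y)$. Throughout I would write $I(\lambda;Y)=H(Y)-H(Y|\lambda)$ and expand as $\E\to 0$.

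For the achievability half I would evaluate $I$ on the on-off-keying ensemble $\lambda\in\{0,A\}$ with $\Pr[\lambda=A]=p$ and $pA=\E$, and optimize over the single free parameter $A$. Since $A\to 0$ at the optimum, the $Y\ge 2$ events contribute only $o(\E)$ and the output is effectively binary (no click vs.\ at least one click); one then finds $H(Y|\lambda)=p\,H(\mathrm{Poisson}(A))=\E\log\frac1A+\E+o(\E)$, while the click probability is $q=p(1-e^{-A})=\E(1-\tfrac{A}{2}+\cdots)$. Carrying the expansions of $H(Y)$ and $H(Y|\lambda)$ to second order in $A$ gives
\be
I = \E\log\frac{A}{\E}\;-\;\frac{\E A}{2}\,\log\frac1\E\;+\;o(\E),
\ee
where the correction $-\tfrac{\E A}{2}\log\frac1\E$—whose coefficient $\tfrac12$ descends from the second-order term $-A^2/2$ in $1-e^{-A}$—is precisely what prevents $A$ from shrinking without bound. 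Setting $\partial I/\partial A=0$ locates the maximizer at $A^\ast=2/\log\frac1\E$, hence $p^\ast=\E/A^\ast=\frac{\E}{2}\log\frac1\E$, and substituting back yields $I=\E\log\frac1\E-\E\log\log\frac1\E+O(\E)$. This establishes the lower bound in~\eqref{eqn:DD_approximation} together with the claimed scaling of $p^\ast$.

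The harder half is the converse: I must show no input with $\mathbb{E}[\lambda]\le\E$ beats this to within $O(\E)$. Bounding $H(Y)$ by the maximum entropy of a nonnegative-integer variable of mean $\le\E$ (the geometric bound $(1+\E)\log(1+\E)-\E\log\E$) only yields $I\le\E\log\frac1\E+O(\E)$ and misses the $-\E\log\log\frac1\E$ saving; the difficulty is that an adversarial input could place mass on large $\lambda$ to buy output entropy cheaply, so $H(Y)$ and $H(Y|\lambda)$ must be controlled jointly rather than separately. My plan is to use the dual (auxiliary-output) characterization $I(\lambda;Y)=\mathbb{E}_\lambda\!\left[D\!\left(\mathrm{Poisson}(\lambda)\,\big\|\,P_Y\right)\right]\le\mathbb{E}_\lambda\!\left[D\!\left(\mathrm{Poisson}(\lambda)\,\big\|\,Q\right)\right]$, valid for \emph{any} fixed output law $Q$ because the gap equals $D(P_Y\|Q)\ge 0$. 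Choosing $Q$ with a point mass $\approx 1-\E$ at $0$ and a tail whose scale is matched to $A^\ast=2/\log\frac1\E$ makes $D(\mathrm{Poisson}(\lambda)\,\|\,Q)=-H(\mathrm{Poisson}(\lambda))-\mathbb{E}_{\mathrm{Poisson}(\lambda)}[\log Q(Y)]$ an essentially affine function of $\lambda$, whose average over any admissible $P_\lambda$ is at most $\E\log\frac1\E-\E\log\log\frac1\E+O(\E)$. Matching the two bounds then proves~\eqref{eqn:DD_approximation} and, since OOK attains it, confirms the asymptotic optimality of the on-off input. I expect the main obstacle to be the design of $Q$: its tail must be tuned so that the $\log Q$ penalty reproduces exactly the $\log\log\frac1\E$ term while keeping the residual genuinely $O(\E)$ \emph{uniformly} over all admissible inputs, and this uniform control is where the bulk of the technical effort will lie.
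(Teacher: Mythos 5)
Your proposal is correct, and while your achievability half matches the paper's, your converse takes a genuinely different route. On achievability, the paper's Appendix~\ref{sec:pf_lem_cap_DD} simply substitutes $p^*=\frac{\E}{2}\log\frac1\E$ into the on-off-keying mutual information and Taylor-expands, whereas you derive $p^*$ by optimizing over the on-amplitude $A$; this is the same computation, with the added benefit that your version explains where the scaling of $p^*$ comes from (the $-\frac{\E A}{2}\log\frac1\E$ penalty against the $\E\log A$ gain). The real divergence is the converse: the paper proves none inside this lemma, instead inheriting it from the converse of Theorem~\ref{thm:cap_coh} (Appendix~\ref{sec:proof_cap_coh}), which runs an induction on the input-alphabet size via the split $I=I(N_1;Y)+I(X;Y|N_1)$, the binary-input bound imported from Lemma~1 of \cite{chung2016}, the channel-comparison Lemma~\ref{lem:binary_ch_inequal}, and the energy accounting $\E_1+\bigl(\sum_i p_i\bigr)\E_2=\E$; that argument is confined to finite input alphabets and the binarized (click/no-click) output model. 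Your duality bound $I(\lambda;Y)\le\mathbb{E}_\lambda\bigl[D\bigl(\mathrm{Poisson}(\lambda)\,\|\,Q\bigr)\bigr]$ is instead the route of Wang and Wornell \cite{wang2014refined} (cited by the paper but not used in its proofs); it is self-contained and covers the true photon-counting channel with arbitrary, even continuous, input laws, so it is strictly more general than what the paper establishes. Moreover, the uniform control you flag as the main obstacle is easier than you anticipate: the tail of $Q$ need not be tuned to $A^*$ at all. Taking $Q(0)=1-\E$ and $Q(k)=\E\,2^{-k}$ for $k\ge1$ gives $D\bigl(\mathrm{Poisson}(\lambda)\,\|\,Q\bigr)\le \min\{0,\;\lambda\log\lambda-\lambda\}+(1-e^{-\lambda})\log\frac1\E+\lambda\log 2+2\E$, where the first term uses $\mathbb{E}[\log k!]\ge\max\{0,\log\Gamma(\lambda+1)\}$; for $\lambda\ge1$ the bound $1-e^{-\lambda}\le(1-e^{-1})\lambda$ already beats the target, and for $\lambda<1$ the required inequality reduces to $\sup_{u>0}\bigl(\log u-\tfrac{u}{3}\bigr)<\infty$ with $u=\lambda\log\frac1\E$ --- precisely the same tension between $\log\lambda$ and $-\frac{\lambda}{2}\log\frac1\E$ that fixes $A^*$ in your achievability, which is where the $-\E\log\log\frac1\E$ term is born. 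In short, the paper buys its converse for free from the main theorem it must prove anyway; your route buys independence from Theorem~\ref{thm:cap_coh}, removes the finite-alphabet and binarized-output restrictions, and aligns this lemma with the refined Poisson-channel analysis in the literature.
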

\begin{proof}
Appendix \ref{sec:pf_lem_cap_DD}.
\end{proof}

Comparing~\eqref{eqn:holevo_approximation} and~\eqref{eqn:DD_approximation}, we observe that the two capacities have the same first-order term. This means as $\E\to 0$, the optimal photon information efficiency of $\log (1/\E)$ nats/photon can be achieved even with a very simple direct-detection receiver that acts directly and individually on each of the $N$ symbols of the $N$-mode modulated codeword.

In practice, however, the second-order terms in these two capacity expressions result in a significant difference in the high-photon-efficiency regime. For example, if one wishes to achieve a photon information efficiency of $10$ bits/photon, one can solve for $\E$ that satisfies $C(\E)/\E= 10$ bits/photon in both cases, and get $\E_{\sf Holevo} \approx 0.0027$ and $\E_{\sf DD} \approx 0.00010$. The resulting capacities (bits/mode, or equivalently the bits/sec-Hz spectral efficiencies) differ by more than one order of magnitude (by a factor of $\approx 26$ to be precise). So, if one is operating in a photon-starved regime, for instance in a deep space communications scenario where the mean photon number $\E$ per (temporal) mode is extremely small due to technological constraints on the transmit laser power and the large channel loss ($\eta \ll 1$), a Holevo-capacity-achieving receiver would attain more than an order of magnitude higher data rate for a given temporal bandwidth that can be supported by the transmit modulator and the receiver. This example indicates that although ~\eqref{eqn:holevo_approximation} and~\eqref{eqn:DD_approximation} have the same limit as $\E\to 0$, the rates at which this limit is approached are quite different, which can be of practical importance in photon-starved communication settings. Similar phenomena have also been observed for wideband wireless channels \cite{Verdu02, ZTM03}. 
 
Therefore, the second-order terms in the capacity expressions ~\eqref{eqn:holevo_approximation} and~\eqref{eqn:DD_approximation} cannot be ignored. In fact, any reasonable scheme that employs feedback-assisted coherent processing along with photon detection in the receiver should at the very least achieve a rate higher than that with direct detection alone, and thus should have the leading term as $\E\log \frac{1}{\E}$. It is the second-order term in the achievable rate that indicates whether a new receiver-structure proposal would make a significant step towards achieving the Holevo-capacity limit. In the following, we will study the achievable rate over the pure-loss optical channel with the Kennedy-Dolinar receiver front end as shown in Figure \ref{fig:dolinar}, and evaluate its rate performance and how it scales for small $\E$.

The problem of coded transmission and finding the maximum information rate that can be conveyed through an optical channel with a coherent-processing receiver is in fact easier than the problem of $M$-ary hypothesis testing we considered in Section~\ref{sec:gen_M}, even though there are exponentially many possible messages to discriminate between. The key observation is that when communicating with a long block of $N$ symbols (with $N \to \infty$), there is no issue of a pressing deadline for making a which-message decision for {\em most} of the time during the reception of a codeword. Therefore, it makes sense to always use the mutual information maximization to decide which control signal to apply. A straightforward generalization of the Dolinar receiver can be described as follows:

During the $i$-th channel use, $i \in \{1, \ldots, N\}$, the encoding map can be written $f_i: \{1, 2, \ldots, M=e^{NR}\} \to X_i \in {\cal X}$, where $X_i$ is the symbol transmitted in the $i$-th use of the channel. This map ensures that $X_i$ has a desired input distribution $P_X$, computed under the assumption that all messages are equally likely, i.e., $ \frac{1}{e^{NR}} | \{ m: f_i(m) = x\}| = P_X(x), \quad \forall x \in {\cal X}$. 

The receiver keeps track of the posterior distribution over the messages. Given $P_{M_s|Y_1^{i-1}}(\cdot | y_1^{i-1})$, which is the distribution over the messages conditioned on the previous observations, the effective input distribution when the receiver is about to act on the $i$-th channel symbol, $P'_X (x) = \sum_{m: f_i(m)=x} P_{M_s|Y_1^{i-1}}(m|y_1^{i-1})$ can be computed. Using this as the prior distribution of the transmitted symbol, the receiver can apply the control signal that maximizes the mutual information. 

Upon observing the output Poisson process in the $i$-th symbol period, denoted as $Y_i=y_i$, the receiver computes the posterior distribution of the transmitted symbol $P''_X(x) = P_{X_i|Y_i}(x|y_i)$. We omit the conditioning on the history $Y_1^{i-1}$ here to emphasize that the update is based on the observations in a single symbol period.
The receiver uses $P''_X(x)$ to update its knowledge of the messages in the following manner:
\be
P_{M_s|Y_1^i} (m|y_1^i)=P_{M_s|Y_1^{i-1}}(m|y_1^{i-1}) \cdot \frac{P''_X(x)}{P'_X(x)}
\ee
for all $m$ such that $f_i(m)=x$. This can be shown from
\be
\begin{split}
&P_{M_s|Y_1^i} (m|y_1^i)\\
&=P_{M_s|Y_1^{i-1}}(m|y_1^{i-1})\frac{P_{Y_i|M_s,Y_1^{i-1}}(y_i|m,y_1^{i-1})}{P_{Y_i|Y_1^{i-1}}(y_i|y_1^{i-1})} \\
&=P_{M_s|Y_1^{i-1}}(m|y_1^{i-1})\frac{P_{Y_i|X_i,Y_1^{i-1}}(y_i|x,y_1^{i-1})}{P_{Y_i|Y_1^{i-1}}(y_i|y_1^{i-1})} \\
&=P_{M_s|Y_1^{i-1}}(m|y_1^{i-1})\frac{P_{X_i|Y_i,Y_1^{i-1}}(x|y_i,y_1^{i-1})}{P_{X_i|Y_1^{i-1}}(x|y_1^{i-1})} 
\end{split}
\ee
Repeating this process, we have a coherent-processing receiver based on updating the receiver knowledge. 

There are two assumptions we make to simplify the analysis of capacity with a general coherent processing. Below are these assumptions. 

First, we assume that the control signal $l_i$ is kept constant within each symbol period (let us say, $\Delta$). Suppose that the $i$-th input symbol $X_i$ is transmitted over the symbol period $\Delta$. During this symbol period, the receiver would be able to continuously update the posterior distribution of $X_i$, which makes the effective input distribution deviate from the prior distribution. With the updated input distribution, the optimal control signal that maximizes the mutual information at each time instant might also change. But, here we assume that the control signal $l_i$ is determined at the beginning of each symbol period and kept constant during $\Delta$. 

Second, we will approximate the output Poisson process in each symbol period as a Bernoulli process, indicating either 0 or 1 photon arrival.
This assumption may not degrade the rate performance in a significant way when the mean photon number $\E$ per symbol is small enough.

The main result of our paper is the following theorem:

\begin{theorem}\label{thm:cap_coh}
\label{thm:sad}
{\it Consider a receiver front end as shown in Figure \ref{fig:dolinar}, and a control signal that is kept constant within each symbol of a codeword but updated from one symbol to the next. The photon counter at the receiver detects whether or not there are any photon arrivals within each symbol period. Suppose that the transmitted symbols are drawn from a finite alphabet, i.e., for the $i$-th channel, $i = 1, \ldots, N$, the transmitted optical signal $\ket{X_i}$ is chosen from $X_i \in {\cal X} \subset \mathbb{C}$ with $|{\cal X}|$ finite. Input symbols satisfy a mean-photon-number constraint $\mathbb{E}[|X_i|^2]=\E$ per  mode (per channel use). Then the achievable photon information efficiency (nats/photon) is bounded above as
\be
\label{eqn:sad}
\frac{C_{\sf coherent}(\E)}{\E} \leq \log \frac{1}{\E} - \log \log \frac{1}{\E} + O(1)
\ee  
when $\E\to0$.}
\end{theorem}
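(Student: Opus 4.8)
The plan is to reduce the block statement to a single-symbol bound and then to re-express that single-symbol bound so that the ``free'' control becomes an ordinary power constraint. Since the control $l_i$ applied during the $i$-th symbol is held constant over the symbol and is a deterministic function of the past outputs $y_1^{i-1}$ (the receiver updates its posterior recursively), conditioned on $Y_1^{i-1}=y_1^{i-1}$ the $i$-th symbol sees a fixed binary-input, binary-output channel. I would start from $I(M_s;Y_1^N)\le I(X_1^N;Y_1^N)=\sum_{i=1}^N I(X_i;Y_i\mid Y_1^{i-1})$, using that $Y_i$ depends on $(X_1^N,Y_1^{i-1})$ only through $X_i$ and the determined control. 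Each term $I(X_i;Y_i\mid Y_1^{i-1}=y_1^{i-1})$ is the mutual information of one use of the induced binary-output channel with input law $P_{X_i\mid Y_1^{i-1}=y_1^{i-1}}$ of some conditional mean energy, hence is at most $C_1(\cdot)$ evaluated at that energy, where $C_1(\E')=\sup\{I(X;Y):\mathbb E|X|^2\le\E'\}$ is the single-letter capacity--cost function with the supremum also ranging over the free control $l$. Concavity and monotonicity of the capacity--cost function, together with Jensen's inequality and the average-power constraint $\tfrac1N\sum_i\mathbb E|X_i|^2\le\E$, then collapse the sum to $\tfrac1N I(X_1^N;Y_1^N)\le C_1(\E)$. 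It therefore suffices to prove $C_1(\E)\le\E\log\frac1\E-\E\log\log\frac1\E+O(\E)$.

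The central step is a change of variables that turns the unconstrained control into a variance constraint. Writing $X'=X+l$ for the field actually seen by the detector, one use of the channel is $Y\sim\mathrm{Bern}(1-e^{-|X'|^2})$, while the energy spent is $\mathbb E|X|^2=\mathbb E|X'-l|^2$. Because $l$ is unconstrained, minimizing over it gives $\min_l\mathbb E|X'-l|^2=\Var(X')$, so exactly $C_1(\E)=\sup\{I(X';Y):\Var(X')\le\E\}$. Since $Y$ depends on $X'$ only through $R=|X'|$, one has $I(X';Y)=I(R;Y)=h(\bar W)-\mathbb E[h(W)]$ with $W=1-e^{-R^2}$ and $\bar W=\mathbb E[W]$; and for a fixed law of $R$ the quantity $\Var(X')$ is smallest when $X'$ has constant phase, giving the constraint $\Var(R)\le\E$. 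Thus $C_1(\E)=\sup\{h(\bar W)-\mathbb E[h(W)]:R\ge0,\ \Var(R)\le\E\}$. This is exactly the direct-detection supremum of Lemma~\ref{lem:cap_DD}, except that the mean constraint $\mathbb E[R^2]\le\E$ is replaced by the variance constraint $\Var(R)\le\E$; since $\Var(R)\le\mathbb E[R^2]$, coherent processing can only help, and the whole theorem is the assertion that the gain is confined to the $O(\E)$ term.

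What remains is to show that the extra freedom of a cost-free nonzero mean does not improve the second-order term. I would split on the coherence ratio $\rho=(\mathbb E R)^2/\mathbb E[R^2]\in[0,1]$. In the concentrated regime ($\rho$ near $1$, i.e.\ $\mathbb E R$ large relative to the spread), Chebyshev's inequality forces $R$---and hence $W=1-e^{-R^2}$, which is a Lipschitz function of $R$ with constant below $1$---to concentrate, so the chi-square bound $I(R;Y)\le \Var(W)/\big(\bar W(1-\bar W)\big)\le \E/\big(\bar W(1-\bar W)\big)$ shows the rate is of lower order and well inside the claimed bound. In the spread regime ($\rho=o(1)$) one has $\mathbb E[R^2]=\Var(R)/(1-\rho)\le\E(1+o(1))$, so $I(R;Y)\le C_{\sf DD}(\mathbb E[R^2])\le C_{\sf DD}(\E(1+o(1)))$, and the target estimate follows from Lemma~\ref{lem:cap_DD} together with the smoothness of $C_{\sf DD}$.

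The hard part is making this dichotomy quantitatively tight. One must locate the crossover threshold on $\rho$ (equivalently on the mean $\mathbb E R$) and, crucially, in the spread regime control the mean sharply enough that the energy inflation $\E\mapsto\E(1+o(1))$ perturbs $C_{\sf DD}$ by only $O(\E)$ and not by $o(\E\log\frac1\E)$; equivalently, one must prove that any input attaining a rate of order $\E\log\frac1\E$ is forced to have asymptotically negligible mean, $(\mathbb E R)^2=o(\E)$. This is the main obstacle, and I expect it to require a careful variational analysis of the functional $h(\bar W)-\mathbb E[h(W)]$ under $\Var(R)\le\E$---most likely reducing the optimizer to a two-point (on--off) support via a Lagrangian/Karush--Kuhn--Tucker argument and then extracting asymptotics---in order to recover the exact $-\E\log\log\frac1\E$ coefficient rather than merely the leading $\E\log\frac1\E$ behavior.
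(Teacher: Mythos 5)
Your reduction to a single-letter problem follows the paper's own first steps (Fano's inequality plus the chain rule $I(X_1^N;Y_1^N)=\sum_i I(X_i;Y_i\mid Y_1^{i-1})$), and your change of variables $X'=X+l$, which converts the cost-free control into the variance constraint $\Var(R)\le\E$ on $R=|X'|$, is correct and genuinely more elegant than anything in the paper (it also treats the conditional-energy/Jensen issue more carefully than the paper does). But the proposal stops exactly where the theorem's content begins: the bound $\sup\{H_{\sf B}(\bar W)-\mathbb E[H_{\sf B}(W)]:\Var(R)\le\E\}\le \E\log\frac1\E-\E\log\log\frac1\E+O(\E)$ is never proved --- you explicitly defer it to a ``careful variational analysis'' at the end. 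That deferred step is the entire theorem; everything before it is standard manipulation. The paper spends all of Appendix~\ref{sec:proof_cap_coh} on precisely this point, via an induction on the alphabet size $|\mathcal{X}|=L$: the base case $L=2$ is imported from Lemma~1 of~\cite{chung2016}, and the inductive step splits off the largest-amplitude symbol with an indicator $N_1$, writes $I_l(P_X,P_{Y|X})=I(N_1;Y)+I(X;Y\mid N_1)$, replaces the remaining $L$-symbol sub-ensemble by a single coherent state at its mean amplitude $\overline S$ using a binary-channel monotonicity lemma (Lemma~\ref{lem:binary_ch_inequal}), closes with the energy bookkeeping $\E_1+\beta\E_2=\E$, and finally handles complex amplitudes by splitting into two independent real-quadrature Poisson channels. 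Nothing in your write-up substitutes for this; what you have is a (nice) reformulation of the statement, not a proof of it.

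Moreover, the dichotomy you sketch is not merely incomplete --- its claim about the ``easy'' regime is false. Take $R=r_0+sB$ with $B\sim\mathrm{Bernoulli}(\delta)$, $s=1$, $\delta=\E$, and $r_0^2=\sqrt{\E}$. Then $\Var(R)\approx\E$ and $(\mathbb E R)^2/\mathbb E[R^2]\to 1$, so this input lies in your concentrated regime; yet it is just on--off keying riding on a pedestal, and a direct computation gives $I(R;Y)\approx \delta\,(W_1-W_0)\log(1/W_0)=\Theta\!\left(\E\log\frac1\E\right)$, i.e.\ leading order, not ``of lower order.'' Your chi-square bound cannot detect this: with the global Lipschitz constant it yields only $I(R;Y)\le\E/\bigl(\bar W(1-\bar W)\bigr)\approx\sqrt{\E}$, which is far \emph{above} the target, so it neither confirms the claimed bound nor excludes leading-order rates in that regime. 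The true statement --- that a common offset $r_0$ acts as a dark-count pedestal and can only worsen the $-\E\log\log\frac1\E$ term --- is essentially as hard as the theorem itself, so the split on the coherence ratio $\rho$ does not isolate an easy case. Two smaller issues: after your reformulation, concavity of the capacity--cost function is no longer automatic (the variance of a mixture exceeds the mixture of variances, and a supremum of concave functions need not be concave), so your Jensen step should be run with the concave envelope; and your single-letter supremum ranges over arbitrary distributions of $R$, whereas the paper's induction only ever controls finite alphabets --- a harder statement that your argument, once completed, would need to justify.
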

\begin{proof}
Appendix \ref{sec:proof_cap_coh}.
\end{proof}

Thus the achievable photon information efficiency with the Kennedy-Dolinar receiver front end is not significantly different from that of ideal direct detection alone. Note that despite the capacities in Eqs.~\eqref{eqn:sad} and ~\eqref{eqn:DD_approximation} being identical, the codes that the respective receiver may employ to attain this capacity may be very different.
\begin{figure}
\centering
\includegraphics[width=\columnwidth]{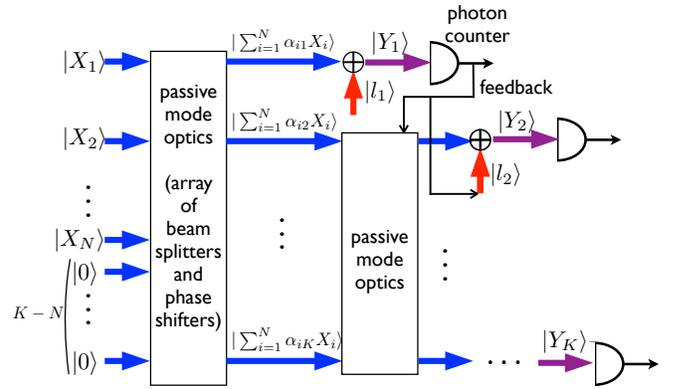}
\caption{General coherent-processing receiver with joint processing over multiple symbols. At the first stage, the receiver codeword  $\ket{X_1}\ket{X_2}\ldots\ket{X_N}$ augmented with $(K-N)$ auxiliary modes $\ket{0}\dots\ket{0}$ is processed by a set of beam splitters and phase shifters to generate a sequence of $K$ coherent states $\sum_{i=1}^N \alpha_{ij} X_i$, where $\sum_j |\alpha_{ij}|^2 \leq 1, \forall i$ and $\sum_i |\alpha_{ij}|^2 \leq 1, \forall j$. The receiver applies control signal $l_1$ to the first mixed signal $\sum_{i=1}^N \alpha_{i1} X_i$, to obtain $Y_1=\sum_{i=1}^N \alpha_{i1} X_i + l_1$ and detect the state with a photon counter. Given observations, a new set of parameters for the next passive mode transformation and the second control signal $l_2$ are determined. 
We repeat the similar process until all the $K$ output states are detected.
With this general coherent-processing receiver, the number $K$ of total output states detected at the receiver can be much larger than the number $N$ of received states.  }
\label{fig:general}
\end{figure}

This theorem is a useful step in understanding  the performance of a more general coherent-processing receiver with joint processing over multiple symbols. Let us consider the general receiver construct shown in Fig.~\ref{fig:general}, which is the natural generalization of the original Dolinar receiver idea as we describe below. The received codeword $\ket{X_1}\ket{X_2}\ldots\ket{X_N}$, where each $X_i$ is drawn from an alphabet, is processed by a general passive linear optical transformation---a circuit that can be composed of beamsplitters and phase shifters---to produce an $K$-mode product coherent state vector $\ket{Z_1}\ket{Z_2}\ldots\ket{Z_K}$, where ${\boldsymbol Z} = U_1{\boldsymbol X}$ with ${\boldsymbol Z} = [Z_1, Z_2, \ldots, Z_K]^{\rm T}$, ${\boldsymbol X} = [X_1, X_2, \ldots, X_N, 0, \ldots, 0]^{\rm T}$, and $U_1$ is a $K$-by-$K$ complex-valued unitary matrix. Fig.~\ref{fig:general} shows $K-N$ auxiliary modes at the input in a product of vacuum states. In the limit of infinite $K$, the mode transformation $U_1$ can produce arbitrarily-many output amplitudes that are each arbitrarily small. Thus the output sequence of $K$ coherent states have complex amplitudes, $\sum_{i=1}^N \alpha_{ij} X_i$, where $\sum_j |\alpha_{ij}|^2 \leq 1, \forall i$ and $\sum_i |\alpha_{ij}|^2 \leq 1, \forall j$, with equalities when the linear mode transformation is lossless. This translates to the physical constraint of energy conservation and the fact that duplication or noiseless amplification of coherent states is not possible. This action, a passive mode transformation, can always be broken down into $O(K^2)$ $2$-input $2$-output beamsplitters and phase shifters~\cite{Reck1994}. The receiver then applies an arbitrary control signal (coherent displacement) $l_1$ to the first output mode of $U_1$, to obtain $Y_1 =\sum_{i=1}^N \alpha_{i1} X_i + l_1$ and uses a photon detector to detect it. The detection outcome (a click or not) is then used to determine another linear mode transformation $U_2$ that mixes the $K-1$ remaining coherent states as well as to determine the coherent displacement $l_2$ applied to the first output mode produced by $U_2$ to produce $Y_2 =\sum_{i=1}^N \alpha'_{i2} X_i + l_2$, and so on. The receiver progressively detects output coherent states $\ket{Y_1}, \ket{Y_2}, \ldots, \ket{Y_K}$, while allowing for the control signals $l_j$ as well as the mixing parameters to be updated adaptively in each step based on the earlier observations. Note here that the original Dolinar receiver is a special case of this general receiver strategy (shown in Fig.~\ref{fig:general}) where the input is a one-mode ($N=1$) coherent state and each of the linear-optical mode transformations $U_1, U_2, \ldots$ are uniform mixers. One example of a uniform mixer is the linear-optical Hadamard unitary, considered in~\cite{Guh10}. 

Following the spirit of Theorem \ref{thm:sad}, we state the following conjecture.

\begin{conjecture}\label{conj:sad}
{\it The maximum achievable photon information efficiency using an optical receiver as shown in Figure~\ref{fig:general}---a collective-measurement multi-mode generalization of the Dolinar receiver---is given by ~\eqref{eqn:sad}.}
\end{conjecture}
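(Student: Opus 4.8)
The plan is to prove the conjectured upper bound by reducing the many-mode adaptive apparatus of Figure~\ref{fig:general} to a sequence of elementary binary (on/off) measurements and then bounding the total information those measurements can extract per input photon, mirroring the argument behind Theorem~\ref{thm:sad} but replacing its per-symbol accounting with a \emph{global} energy budget. First I would fix an arbitrary receiver strategy---the adaptive sequence of passive unitaries $U_1, U_2, \ldots$ together with the displacements $l_1, l_2, \ldots$---and an arbitrary input distribution on codewords $\mathbf{X}$. Because passive linear optics maps a product coherent state to a product coherent state, the state just before the $j$-th detection is, conditioned on the past outcomes $Y_1^{j-1}$, a coherent state of amplitude $\beta_j(\mathbf{X}) = \sum_i \alpha_{ij} X_i + l_j$, and the on/off detector produces $Y_j \in \{0,1\}$ with $\Pr(Y_j = 0 \mid \mathbf{X}, Y_1^{j-1}) = e^{-|\beta_j(\mathbf{X})|^2}$. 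Thus the induced map from $\mathbf{X}$ to $\mathbf{Y} = Y_1^K$ is a well-defined adaptive classical channel whose statistics are fixed entirely by these displaced, mixed amplitudes. Achievability of the bound is immediate, since the class of receivers contains plain direct detection (trivial mixing, $l_j = 0$), which attains $\log(1/\E) - \log\log(1/\E) + O(\E)$ per photon by Lemma~\ref{lem:cap_DD}; the substantive task is the converse.

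Next I would apply the chain rule, $I(\mathbf{X}; \mathbf{Y}) = \sum_{j=1}^K I(\mathbf{X}; Y_j \mid Y_1^{j-1})$, to split the total information into contributions from individual binary measurements, exactly as in the single-mode treatment. For each term I would establish a \emph{per-mode efficiency bound}: the conditional mutual information of a binary on/off observation of a displaced coherent state whose signal amplitude is $\mu_j(\mathbf{X}) = \sum_i \alpha_{ij} X_i$ is controlled by the conditional mean signal energy $\mathbb{E}[\,|\mu_j|^2 \mid Y_1^{j-1}]$ deposited in that mode, and the information extracted per unit of that signal energy cannot exceed $\log(1/\E) - \log\log(1/\E) + O(1)$ as $\E \to 0$. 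The displacement $l_j$ and the auxiliary vacuum modes carry no information about the message; as in Lemma~\ref{lem:opt_feed}, where an imaginary $l$ merely inflates both Poisson rates without improving discrimination, I would argue that displacement can only rescale the effective rate and cannot raise the information-per-signal-photon beyond the threshold-detection value that direct detection already achieves.

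The final step is the global energy budget. Energy conservation of the passive mode transformations (no signal photons are created by beamsplitters and phase shifters, so $\|\mathbf{Z}\|^2 \le \|\mathbf{X}\|^2$) bounds the total signal energy available for detection across all $K$ stages by the input energy $\sum_i |X_i|^2$, whose mean is $N\E$. Summing the per-mode bounds against this budget and dividing by $N\E$ would then yield $C_{\sf coherent}(\E)/\E \le \log(1/\E) - \log\log(1/\E) + O(1)$, matching~\eqref{eqn:sad} and collapsing to Theorem~\ref{thm:sad} when the mixers are trivial ($K = N$, no cross-symbol mixing).

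The main obstacle---and the reason this is left as a conjecture---is controlling the regime $K \to \infty$, in which the mixer can disperse the amplitude of a single input mode across arbitrarily many output modes, each detected with fresh adaptive feedback and each carrying vanishing energy. One must show that this coherent spreading, combined with large adaptive displacements, cannot extract information more efficiently than direct detection: the per-mode bound must hold \emph{uniformly} as the per-mode energy tends to zero, and a sum of infinitely many small contributions must still respect the energy budget once the displacement photons (which are information-free but consume detection events) are correctly separated from the signal photons. The footnote's concern about unbounded effective rates $\lambda_i$ reappears here in sharper form, since each of arbitrarily many feedback stages could in principle choose a large displacement; ruling out any cumulative advantage from such coherent multi-mode processing is precisely the delicate step that separates the provable single-mode Theorem~\ref{thm:sad} from the full conjecture.
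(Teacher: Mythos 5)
The statement you set out to prove is Conjecture~\ref{conj:sad}, which the paper itself does \emph{not} prove---the authors explicitly write that they ``leave open the proof of this fully general result,'' and the only progress they cite is Rosati \emph{et al.}~\cite{rosati2017}, who settle the restricted case with no auxiliary vacuum modes. So there is no paper proof to compare your proposal against; it must stand on its own as a converse argument, and it does not.

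The fatal step is the combination of your ``per-mode efficiency bound'' with the ``global energy budget.'' The single-mode argument behind Theorem~\ref{thm:sad} bounds the information from the $j$-th binary observation in terms of the energy $\E_j$ actually present in that detected mode, i.e., by $\E_j \log \frac{1}{\E_j} - \E_j \log\log\frac{1}{\E_j} + O(\E_j)$; the photon information efficiency $\log(1/\E_j)$ \emph{grows} as the per-mode energy shrinks. Since $x\log(1/x)$ is strictly concave, the sum of these per-mode bounds subject to the budget $\sum_{j=1}^K \E_j \leq N\E$ is maximized by spreading the energy, and with uniform spreading $\E_j = N\E/K$ the sum equals $N\E\log\bigl(K/(N\E)\bigr)(1+o(1))$, which diverges as $K \to \infty$. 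Thus summing per-mode bounds against the energy budget yields a vacuous converse precisely in the regime the conjecture is about. Your alternative phrasing---that the information per unit signal energy in each detected mode is bounded by $\log\frac{1}{\E} - \log\log\frac{1}{\E} + O(1)$ with the \emph{global} per-input-mode $\E$ rather than the per-detected-mode $\E_j$---is not something the single-mode argument gives you; it is equivalent to the conjecture itself. Your closing paragraph gestures at this as a ``uniformity'' issue to be controlled, but the problem is sharper: the proposed accounting provably cannot close, and ruling out a cumulative advantage from coherent spreading requires a genuinely different idea (of the kind developed in~\cite{rosati2017} for the no-ancilla case), which is exactly why the statement remains a conjecture rather than a theorem.
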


While this conjecture is a negative one, it is of immense practical importance in understanding the power of linear optical processing and photon detection, and may have implications to other applications of quantum-limited optical processing such as in linear optical quantum computing (LOQC). Even though the codewords being discriminated are a product (sequence) of (classical) coherent states, the optimal capacity-achieving receiver must use non-classical joint processing over the modulated codeword prior to detecting it. We believe that~\eqref{eqn:sad} quantifies the ultimate rate performance achievable by absolutely any optical receiver whose workings can be described quantitatively correctly using the semi-classical (shot noise) theory of photo detection. Direct detection without any feedback or coherent pre-processing can already attain this performance. This conjecture's truth would imply that in order to achieve the photon information efficiency predicted by the Holevo limit, it would be necessary to use truly quantum processing within the receiver. Examples of such actions include replacing the coherent-state local control signals with squeezed states, or mixing the received codeword with a locally prepared $N$-mode entangled state prior to detection. In order to analyze such receivers, we can no longer use shot-noise (Poisson-limited) noise models, and must resort to the full quantum theory of photo detection. 

In recent work, Rosati {\em et al.} proved the aforesaid conjecture for a receiver structure we consider above, but restricted to the case of no auxiliary vacuum modes, i.e., $U_1$ acting on $N$ modes, $U_2$ on $N-1$ modes, and so on~\cite{rosati2017}. It will be interesting to consider whether their proof technique applies to the more general case.

Finally, we would like to note that even though we believe that the receiver structure described in Conjecture~\ref{conj:sad} (a collective-measurement multi-mode generalization of the Dolinar receiver) is ineffective in attaining capacity that is any better than what ideal direct detection alone can, this type of all-optical pre-processing can immensely lessen the peak-power requirements compared to the high-peak-power OOK modulation that must be used by the direct-detection receiver to attain rate performance as stated in~\eqref{eqn:DD_approximation}. An example of such a receiver was described in~\cite{Guh10}, using which a binary-phase-shift-keying modulation (which has the minimum possible peak power in the $\E \ll 1$ regime) could achieve the same rate scaling as in~\eqref{eqn:DD_approximation}. The scheme in~\cite{Guh10} uses a passive linear-mode mixing on the codeword symbols, but does not use any local signals prior to detection. In order to attain $10$ bits/photon using OOK (or, pulse-position) modulation with direct detection, one would require roughly $3$ orders of magnitude higher peak power compared to this scheme. For deep-space communications, reduction in the peak laser-power requirement could translate to much longer ranges being made possible.

\section{Conclusion}\label{sec:conclusion}
We studied the general coherent-state hypothesis-testing problem and the capacity of the pure-loss optical channel with a general {\em coherent-processing} receiver---a receiver that uses ideal direct detection, and coherent electro-optic feedback control that mixes a coherent-state local oscillator with the incoming signal while it is being detected. We  re-interpreted Dolinar's receiver for optimally discriminating binary coherent-state hypotheses as an instantaneous optimization of the communication efficiency using recursively-updated knowledge based on the observed photon-arrival events. Using this viewpoint, we presented a natural generalization of Dolinar's receiver design to the general $M$-ary coherent-state hypothesis-testing problem. We analyzed the information capacity attained with this generalized Kennedy-Dolinar receiver front end (shown in Figure \ref{fig:dolinar}), and compared the result with that of an ideal direct-detection receiver (with no internal feedback or coherent processing) as well as to that achievable by an unconstrained quantum-limited joint-detection receiver (the Holevo limit), using appropriate scalings in the low photon-number-per-mode regime. 

Our main result in Theorem~\ref{thm:cap_coh} is a negative result, but is of practical importance. It implies that in order to achieve the photon information efficiency predicted by the Holevo limit, it is necessary to resort to truly quantum-limited processing that may include using entanglement or squeezing locally within the receiver, despite the fact that the state of the codeword being demodulated is completely classical. We conjectured that no semi-classical receiver strategy, even one that mixes the received codeword symbols using an arbitrary circuit of passive elements prior to applying adaptive local control signals, would  yield any significant performance improvement over direct detection. Finally, we argued that even if the aforesaid conjecture is true, coherent pre-processing and electro-optic coherent-feedback-control-based optical receiver can immensely reduce the strain on the transmitter and coding fronts, for instance by reducing the peak-transmit-power requirements over a highly lossy optical channel.

\begin{acknowledgments}
{This research was supported by the Defense Advanced Research Projects Agency's (DARPA) Information in a Photon (InPho) program under a contract (\#HR0011-10-C-0159) to Raytheon BBN Technologies, with a subcontract to MIT. SG would like to thank Sam Dolinar and Mark Neifeld for many useful discussions on this topic. The views and conclusions contained in this document are those of the authors and should not be interpreted as representing the official policies, either expressly or implied, of DARPA or the U.S. Government.}
\end{acknowledgments}


\appendix

\section{Proof of Lemma \ref{lem:opt_feed}}\label{sec:pf_lem_opt_feed}
In Lemma \ref{lem:opt_feed}, we show that the optimal choice of the control signal $l$ of Dolinar receiver that maximizes the mutual information $I(H;Y)$ between binary hypothesis $H\in\{0,1\}$ and receiver output $Y\in\{0,1\}$ equals
\be\label{eqn:l^*_appe}
l^*=\frac{S_0\pi_0-S_1\pi_1}{\pi_1-\pi_0},
\ee 
where $\{\pi_0,\pi_1\}$ and $\{S_0,S_1\}$ are input probabilities and signal amplitudes for hypothesis $H\in\{0,1\}$, respectively. 
The channel distribution $P_{Y|H}$ between hypothesis $H$ and receiver output $Y$ is
\be
\begin{split}
P_{Y|H}(j|i)=\begin{cases}
 e^{-\lambda_i\Delta},&j=0,\\
 1-e^{-\lambda_i\Delta},&j=1,\\
\end{cases}
\end{split}
\ee
 where $\lambda_i=|S_i+l|^2$ for $i=0,1$.
The mutual information $I(H;Y)$ of this channel with input probabilities $\{\pi_0,\pi_1\}$ equals
\begin{equation}
\begin{split}
&I(H;Y)\\
=&\pi_0\left(e^{-\lambda_0\Delta}\log\frac{e^{-\lambda_0\Delta}}{\pi_0 e^{-\lambda_0\Delta}+\pi_1e^{-\lambda_1\Delta}}\right.\\
&\quad\quad\left.+\left(1-e^{-\lambda_0\Delta}\right)\log\frac{1-e^{-\lambda_0\Delta}}{1-\pi_0 e^{-\lambda_0\Delta}-\pi_1 e^{-\lambda_1\Delta}}\right)\\
&+\pi_1\left(e^{-\lambda_1\Delta}\log\frac{e^{-\lambda_1\Delta}}{\pi_0 e^{-\lambda_0\Delta}+\pi_1e^{-\lambda_1\Delta}}\right.\\
&\qquad\quad\left.+\left(1-e^{-\lambda_1\Delta}\right)\log\frac{1-e^{-\lambda_1\Delta}}{1-\pi_0 e^{-\lambda_0\Delta}-\pi_1 e^{-\lambda_1\Delta}}\right).
\end{split}
\end{equation}
As $\Delta\to0$, this mutual information can be approximated as
\begin{equation}
\begin{split}\label{eqn:expofIHY}
&I(H;Y)=\\
&\left(\pi_0\lambda_0\log\lambda_0+\pi_1\lambda_1\log\lambda_1\right.\\
&\quad\left. -(\pi_0\lambda_0+\pi_1\lambda_1)\log(\pi_0\lambda_0+\pi_1\lambda_1)\right)\Delta+O(\Delta^2).
\end{split}
\end{equation}

With the control signal $l^*$ in~\eqref{eqn:l^*_appe}, the mutual information $I(H;Y)$ is equal to
\be\label{eqn:MI_woptl}
I(H;Y)|_{l=l^*}=\left(\frac{(S_0-S_1)^2\pi_0\pi_1}{\pi_1-\pi_0}\log\frac{\pi_1}{\pi_0}\right)\Delta+O(\Delta^2).
\ee 

We next show that with any other value for the control signal $l$, the resulting mutual information cannot exceed the right-hand side of~\eqref{eqn:MI_woptl}.
To show this, we use the results in~\cite{holevo1998quantum,sohma2000binary} that when binary input states of amplitudes $\{{S_0},{S_1}\}$ with probabilities $\{\pi_0,\pi_1\}$ are measured by any single-symbol (unentangling) measurement, the resulting mutual information is bounded above by
\be\label{up_IHY}
I(H;Y)\le H_{\sf B}\left(\pi_0\right)-H_{\sf B}\left(P_e^*\right)
\ee
where
\be
\begin{split}\label{eqn:P_e^*_appen}
&H_{\sf B}(p)=-p\log p-(1-p)\log(1-p),\\
&P_e^*=\frac{1-\sqrt{1-4\pi_0\pi_1e^{-(S_0-S_1)^2\Delta}}}{2}.
\end{split}
\ee
As $\Delta\to 0$, $P_e^*$ in~\eqref{eqn:P_e^*_appen} can be approximated as
\be
\begin{split}
&P_e^*=\\
&\frac{1}{2}\left(1-|\pi_0-\pi_1|\left(1+\frac{2\pi_0\pi_1(S_0-S_1)^2\Delta}{(\pi_0-\pi_1)^2}\right)\right)+O(\Delta^2).
\end{split}
\ee
By using this, we can show that the right hand side of~\eqref{up_IHY} is
\be
\begin{split}
&H_{\sf B}(\pi_0)-H_{\sf B}(P_e^*)\\
&=\left(\frac{(S_0-S_1)^2\pi_0\pi_1}{\pi_1-\pi_0}\log\frac{\pi_1}{\pi_0}\right)\Delta+O(\Delta^2).
\end{split}
\ee
This proves that $l^*$ in~\eqref{eqn:l^*_appe} is the optimal choice of $l$ that maximizes $I(H;Y)$ in~\eqref{eqn:expofIHY}.

\section{Proof of Lemma \ref{lemma:binaryoptimality}}\label{sec:pf_lem_opt_feed1}
In Lemma~\ref{lemma:binaryoptimality}, we find the optimal control signal $l^*(t)$ of the coherent receiver from the recursive mutual-information-maximization procedure and show that the resulting probability of error for binary hypothesis testing achieves the YKL limit~\cite{YKL1975, Hel76}, the lower bound on the detection error probability over all possible quantum receivers.

To find the optimal control signal $l(t)$ over time $t$, we consider $S_0(t)$, $S_1(t)$ and $l(t)$ for each infinitesimal interval $t\in[k\Delta,(k+1)\Delta)$ of length $\Delta>0$ for $k\in\{0,1,\dots\}$.
When $S_0$ and $S_1$ denote the constant values of $S_0(t)$ and $S_1(t)$, respectively, for a very small interval $t\in[k\Delta,(k+1)\Delta)$, the optimal control signal $l^*$ that maximizes the mutual information between input hypothesis of probabilities $\{\pi_0$,$\pi_1\}$ and receiver output over the symbol period $\Delta$ is
\be
l^*=\frac{S_0\pi_0-S_1\pi_1}{\pi_1-\pi_0}
\ee
as shown in Lemma~\ref{lem:opt_feed}.
When we choose the control signal $l(t)$ by recursive mutual-information-maximazation procedure and make $\Delta\to0$, the optimal control signal becomes
\be\label{eqn:appBl^*}
l^*(t)=\frac{S_0(t)\pi_0(t)-S_1(t)\pi_1(t)}{\pi_1(t)-\pi_0(t)}
\ee
where $\pi_0(t)$ and $\pi_1(t)$ are posterior probabilities over the two hypotheses, conditioned on the trace of output of the coherent receiver until time $t$.
The question is then how the two posterior probabilities $\pi_0(t)$ and $\pi_1(t)$ evolve over time $t$.

We first focus on the first length-$\Delta$ interval, i.e, $t\in[0,\Delta)$, and find $\pi_0(\Delta)$ and $\pi_1(\Delta)$. Define $\pi_0:=\pi_0(0)$, $\pi_1:=\pi_1(0)$ and assume that $\pi_0\geq \pi_1$ without loss of generality. We define 
\begin{equation}\label{eqn:appBgt}
g(t):=\max\{\pi_0(t)/\pi_1(t), \pi_1(t)/\pi_0(t)\}.
\end{equation}
Note that $g(0)=\pi_0/\pi_1\geq 1$.
When the output of the receiver during the first $\Delta$ interval is denoted as $Y_0\in\{0,1\}$, for $Y_0=0$
\begin{equation}
\begin{split}\label{eqn:post_1st}
&\frac{\Pr(H=0|Y_0=0)}{\Pr(H=1|Y_0=0)}=\frac{\pi_0}{\pi_1}\cdot\frac{\Pr(Y_0=0|H=0)}{\Pr(Y_0=0|H=1)}\\
&=\frac{\pi_0}{\pi_1}\cdot\frac{e^{-(S_0(0)+l(0))^2\Delta}}{e^{-(S_1(0)+l(0))^2\Delta}}.
\end{split}
\end{equation}
By plugging the optimal control signal $l(0)$,
\begin{equation}\label{eqn:optl0}
\begin{split}
l(0)&=\frac{S_0(0)\pi_0-S_1(0)\pi_1}{\pi_1-\pi_0}\\
&=\frac{S_1(0)-S_0(0) g(0) }{g(0)-1},
\end{split}
\end{equation}
which maximizes the mutual information over the first symbol period $\Delta$, we obtain 
\begin{equation}\label{eqn:post_y0}
\begin{split}
&\frac{\Pr(H=0|Y_0=0)}{\Pr(H=1|Y_0=0)}=\frac{\pi_0}{\pi_1}\cdot e^{\left(S_0(0)-S_1(0)\right)^2\frac{g(0)+1}{g(0)-1}\Delta}.
\end{split}
\end{equation}
Note that $\Pr(H=0|Y_0=0)/\Pr(H=1|Y_0=0)\geq \pi_0/\pi_1$ since $g(0)\geq 1$.

When $Y_0=1$, on the other hand, the ratio between the two posterior probabilities becomes
\begin{equation}
\begin{split}\label{eqn:post_2nd}
&\frac{\Pr(H=0|Y_0=1)}{\Pr(H=1|Y_0=1)}=\frac{\pi_0}{\pi_1}\cdot\frac{\Pr(Y_0=1|H=0)}{\Pr(Y_0=1|H=1)}\\
&=\frac{\pi_0}{\pi_1}\cdot\frac{1-e^{-(S_0(0)+l(0))^2\Delta}}{1-e^{-(S_1(0)+l(0))^2\Delta}}.
\end{split}
\end{equation}
As $\Delta\to0$,
\begin{equation}
\begin{split}\label{eqn:post_y1}
&\frac{\Pr(H=0|Y_0=1)}{\Pr(H=1|Y_0=1)}=\frac{\pi_0}{\pi_1}\cdot\frac{(S_0(0)+l(0))^2}{(S_1(0)+l(0))^2}+O(\Delta)\\
&=\frac{\pi_1}{\pi_0}+O(\Delta).
\end{split}
\end{equation}
The ratio between the two posterior probabilities conditioned on $Y_0=1$ in (\ref{eqn:post_y1}) is approximately inverse of that conditioned on  $Y_0=0$ in (\ref{eqn:post_y0}). 
Therefore, $g(t)$ in~\eqref{eqn:appBgt}, indicating how much the receiver is committed to the more likely hypothesis, is uniquely determined and increases at a prescribed 
 rate regardless of photon arrivals over time $[0,t)$.

To find how $g(t)$ evolves over time $t$, without loss of generality we focus on a particular case where no photon arrives during $[0,t)$. 
From (\ref{eqn:post_y0}), 
\begin{equation}
g(\Delta)=\frac{\pi_0}{\pi_1}\cdot e^{\left(S_0(0)-S_1(0)\right)^2\frac{g(0)+1}{g(0)-1}\Delta}.
\end{equation}
Under the assumption that no photon arrives for the next $(N-1)$ intervals, i.e., for the sequence of all-zero outputs $Y_1=\dots=Y_{N-1}=0$,  we obtain the following recursive equation for $g(N\Delta)$:
\begin{equation}
\begin{split}
&g(N \Delta)=\frac{\Pr(H=0|Y_0^{N-1}={\mathbf0})}{\Pr(H=1|Y_0^{N-1}={\mathbf0})}\\
&=\frac{\pi_0}{\pi_1}e^{\left(\sum_{k=0}^{N-1} \left(\left(S_0(k \Delta)-S_1(k \Delta)\right)^2\frac{g(k \Delta)+1}{g(k\Delta)-1}\Delta\right)\right)}.
\end{split}
\end{equation}
By taking $\Delta\to 0$, we obtain
\begin{equation}
\begin{split}\label{eqn:gt_rec}
g(t)=&\frac{\pi_0}{\pi_1} \exp\left[\int_{0}^{t} \left(\left(S_0(\tau)-S_1(\tau)\right)^2\cdot \frac{g(\tau)+1}{g(\tau)-1} \right)d\tau \right]\\
=&g(0)\exp\left[\int_{0}^{t} \left(\left(S_0(\tau)-S_1(\tau)\right)^2\cdot \frac{g(\tau)+1}{g(\tau)-1} \right)d\tau \right].
\end{split}
\end{equation}

Let $N(t)$ be the number of photon arrivals observed during $[0,t)$. We showed that whenever a photon arrives at the receiver, the ratio $\pi_0(t)/\pi_1(t)$ between two posterior probabilities   gets flipped. Therefore, starting from $g(0)=\pi_0/\pi_1\geq 1$, $g(t)$ defined in~\eqref{eqn:appBgt} equals $\pi_0(t)/\pi_1(t)$  if $N(t)$ is even, and equals $\pi_1(t)/\pi_0(t)$  if $N(t)$ is odd.
By using this relation, the optimal control signal $l^*(t)$ in~\eqref{eqn:appBl^*} can be written in terms of $g(t)$ as
\be
l^*(t) = \left\{ 
\begin{array}{ll}
l_0(t) & \quad \mbox{if } N(t) \mbox{ is even}\\
l_1(t) & \quad \mbox{if } N(t) \mbox{ is odd}
\end{array}
\right.
\ee
where 
\be
l_0(t) = \frac{S_1(t)- S_0(t) g(t)}{g(t)-1}, \quad
l_1(t) = \frac{S_0(t)- S_1(t) g(t)}{g(t)-1}.
\ee
Furthermore, the final decision of more likely hypothesis at $t=T$ is $\widehat{H}=0$ if $N(T)$ is even, and $\widehat{H}=1$ otherwise.
The average probability of error is then equal to $P_e=\min\{\pi_0(t),\pi_1(t)\}$, and by the definition of $g(t)$,
\begin{equation}
P_e=\frac{1}{1+g(t)}.
\end{equation}
When we solve the recursive equation on $g(t)$ in~\eqref{eqn:gt_rec}, we obtain
\begin{equation}
\begin{split}
g(t)=&\frac{(1+g(0))^2}{2g(0)}e^{m(t)}-1\\
&+\frac{1+g(0)}{2g(0)}\sqrt{(1+g(0))^2 e^{2m(t)}-4g(0)e^{m(t)}}
\end{split}
\end{equation}
where $m(t)=\int_0^{t} (S_0(\tau)-S_1(\tau))^2d\tau$.
The resulting $P_e$ is 
\begin{equation}
\begin{split}
P_e=&\frac{1}{1+g(t)}\\
=&\frac{1}{2}\left(1-\sqrt{1-4\pi_0\pi_1e^{-\int_0^{t} (S_0(\tau)-S_1(\tau))^2d\tau}}\right),
\end{split}
\end{equation}
which is equal to the YKL limit. 

\section{Proof of Lemma \ref{lem:cap_DD}}\label{sec:pf_lem_cap_DD}
In Lemma \ref{lem:cap_DD}, we show that the capacity of optical channel with direction direction is
\be\label{C1}
C_{\sf DD}(\E)=\E\log\frac{1}{\E}-\E\log\log\frac{1}{\E}+O(\E)
\ee
where $\E$ is the mean photon number per channel use.
This capacity is achievable with on-off keying inputs
\be\label{eqn:on-offkeying_inputs}
\ket{S} = \left\{ \begin{array}{ll} \ket{0},&\qquad \mbox{ with prob. }1-p^*\\
\ket{\sqrt{\E/p^*}}, & \qquad \mbox { with prob. } p^*
\end{array}\right.
\ee 
where 
$\lim_{\E\to 0} \frac{p^*} {\frac{\E}{2} \log \frac{1}{\E}} = 1$.

The converse part of this lemma, i.e., that the capacity of optical channel with direction detection can never exceed
\begin{equation}\label{eqn:DD_cap_OOK}
\E\log\frac{1}{\E}-\E\log\log\frac{1}{\E}+O(\E),
\end{equation}
is implied from the converse proof of Theorem \ref{thm:cap_coh}, which considers a more general receiver type, which makes the direct detection as a special case.

Here we prove the achievability of the capacity in (\ref{C1}) with on-off-keying inputs~\eqref{eqn:on-offkeying_inputs}.
When direct-detection receiver measures the off signal, i.e., $\ket{S}=\ket{0}$, which is transmitted with probability $1-p^*$, the output of direction-detection receiver, which counts the number of photon arrivals per symbol period, equals 0 with probability 1. On the other hand, when on-signal $\ket{S}=\ket{\sqrt{\E/p^*}}$ is transmitted with probability $p^*$, the direction-detection receiver observes 0 photon with probability $e^{-\E/p^*}$ and at least 1 photon with probability $1-e^{-\E/p^*}$. The mutual information between the on-off keying input $S$ and binary output $Y$ of the direction-detection receiver equals
\begin{equation}
\begin{split}\label{C4}
&I(S;Y)=H_{\sf B}\left(p^*\left(1-e^{-\frac{\E}{p^*}}\right)\right)-p^* H_{\sf B}\left(1-e^{-\frac{\E}{p^*}}\right)
\end{split}
\end{equation}
where $H_{\sf B}(p)=-p\log p-(1-p) \log(1-p)$.

For ${p^*} ={\frac{\E}{2} \log \frac{1}{\E}} $, by using the Taylor expansion,  we can approximate
\begin{equation}
\begin{split}
&1-e^{-\frac{\E}{p^*}}=\frac{2}{\log(1/\E)}+O\left(\frac{1}{(\log(1/\E))^2}\right),\\
&p^*\left(1-e^{-\frac{\E}{p}}\right)=\E+O\left(\frac{1}{(\log(1/\E))}\right),\\
\end{split}
\end{equation}
as $\E\to0$.
By using these approximations and $H_{\sf B}(q)=-q\log q+q+O(q^2)$ as $q\to0$, we can show that
\begin{equation}
\begin{split}\label{C6}
& H_{\sf B}\left(p^*\left(1-e^{-\frac{\E}{p^*}}\right)\right)=\E\log\frac{1}{\E}+O(\E),\\
& p^* H_{\sf B}\left(1-e^{-\frac{\E}{p^*}}\right)=\E\log\log\frac{1}{\E}+O(\E).
\end{split}
\end{equation}
From~\eqref{C4} and~\eqref{C6}, we obtain
\begin{equation}
\begin{split}
I(S;Y)=\E\log\frac{1}{\E}-\E\log\log\frac{1}{\E}+O(\E).
\end{split}
\end{equation}
By combining with the converse part, this achievability result implies~\eqref{C1}.

\section{Proof of Theorem \ref{thm:cap_coh}}\label{sec:proof_cap_coh}



In Theorem~\ref{thm:cap_coh}, we show that the achievable photon information efficiency for pure-state optical channels with coherent-processing receiver is bounded above by
\be\label{eqn:sad?}
\frac{C_{\sf coherent}(\E)}{\E} \leq \log \frac{1}{\E} - \log \log \frac{1}{\E} + O(1)
\ee  
where $\E$ is the mean-photon-number constraint for the input coherent state $\ket{X_i}$, $X_i\in\mathcal{X}\subset\mathbb{C}$, for finite $|\mathcal{X}|$, i.e.,
\be
\mathbb{E}[|X_i|^2]\leq\E.
\ee
From Lemma~\ref{lem:cap_DD}, it can be easily shown that the equality in~\eqref{eqn:sad?} is achievable with coherent-processing receiver, since coherent-processing receiver is equivalent to direct-detection receiver when the control signal is fixed to 0 over all communication periods, and Lemma~\ref{lem:cap_DD} shows that the right hand side of~\eqref{eqn:sad?} is achievable with  the direct-detection receiver for on-off-keying input signaling. The remaining thing to show is the converse part of the theorem, i.e.,  the claim that with coherent-processing receiver one can never achieve photon information efficiency better than the right hand side of~\eqref{eqn:sad?}.

Suppose that a message is chosen from a set $\{1,\dots, e^{NR}\}$ with equal probabilities and is transmitted by $N$ uses of the optical channel. The $i$-th transmitted optical signal (coherent state) is denoted by $\ket{X_i}$, $X_i\in\mathcal{X}\subset\mathbb{C}$, and the associated output of the coherent-processing receiver is denoted by $Y_i\in\{0,1\}$, indicating 0 or 1 photon arrival during a very short symbol period. We use the notation $Y_{i}^j$, $j>i$, to indicate a sequence of output random variables $(Y_i,Y_{i+1},\dots, Y_j)$. When $M_s$ and $\hat{M_s}(Y_1^N)$ denote the transmitted message and the estimate of it based on the output sequence $Y_1^N$, respectively, decoding error probability after $N$ uses of the channel is defined as 
\be
P_e^{(N)}=\Pr(M_s\neq \hat{M_s}(Y_1^N)).
\ee
From Fano's inequality~\cite{cover2012elements}, the decoding error probability $P_e^{(N)}$ is bounded below as
\be
P_e^{(N)}\geq 1-\frac{I(X_1^N;Y_1^N)}{NR}-\frac{\ln 2}{NR}.
\ee
If $R>\frac{I(X_1^N;Y_1^N)}{N}$, this lower bound is larger than 0, meaning that $P_e^{(N)}$ does not converge to 0 even when $N\to\infty$.
Therefore, the capacity $C_{\sf coherent}(\E)$ of coherent-processing receiver, which is the maximum information rate that guarantees $P_e^{(N)}\to 0$ as $N\to \infty$, is bounded above by 
\be\label{eqn:ub_C_co}
C_{\sf coherent}(\E)\leq \frac{I(X_1^N;Y_1^N)}{N}.
\ee
We next find an upper bound on $I(X_1^N;Y_1^N)$. First note that 
\be
\begin{split}\label{eqn:chain+}
I(X_1^N;Y_1^N)&=\sum_{i=1}^N \left(H(Y_i|Y_1^{i-1})-H(Y_i|X_1^N,Y_1^{i-1})\right)\\
&=\sum_{i=1}^N \left(H(Y_i|Y_1^{i-1})-H(Y_i|X_i,Y_1^{i-1})\right)\\
&=\sum_{i=1}^N I(X_i;Y_i|Y_1^{i-1})\\
&=\sum_{i=1}^N\mathbb{E}_{Y_1^{i-1}}[I(X_i;Y_i|Y_1^{i-1}=y_1^{i-1})],
\end{split}
\ee
where the first equality is from the chain rule and definition of the mutual information, and the second equality is from the fact that $Y_i$ is independent of $\{X_1^{i-1}, X_{i+1}^N\}$ conditioned on the $i$-th input $X_i$ and the past observations $Y_1^{i-1}$. The third and the fourth equalities are from the definition of the conditional mutual information $I(X_i;Y_i|Y_1^{i-1})$. 

We next provide an upper bound on $I(X_i;Y_i|Y_1^{i-1}=y_1^{i-1})$, which is independent of $Y_1^{i-1}=y_1^{i-1}$.
Since the transmitter does not know the past channel outputs $Y_1^{i-1}=y_1^{i-1}$ at the receiver, the $i$-th input symbol $X_i$ is independent of  $Y_1^{i-1}=y_1^{i-1}$. On the other hand, the $i$-th output symbol $Y_i$ depends not only on the $i$-th input $X_i$ but also on the past channel outputs $Y_1^{i-1}=y_1^{i-1}$ through the control signal $l_i(y_1^{i-1})$ as
\be
\begin{split}\label{eqn:newBB}
&\Pr(Y_i=0|X_i, Y_1^{i-1}=y_1^{i-1})=e^{-|X_i+l_i(y_1^{i-1})|^2},\\
&\Pr(Y_i=1|X_i,Y_1^{i-1}=y_1^{i-1})=1-e^{-|X_i+l_i(y_1^{i-1})|^2},
\end{split}
\ee
for $X_i\in\mathcal{X}\subset\mathbb{C}$.
Here, for simplicity, we subsume the symbol period $\Delta$ into the input signal $X_i$ and the control signal $l_i$, i.e., for symbol period $\Delta$, complex field amplitudes of the input and the control signal are kept constant as $X_i/\sqrt{\Delta}$ and $l_i/\sqrt{\Delta}$, respectively. Due to the constraint on mean photon number per channel use, the input random variable $X_i$ in~\eqref{eqn:newBB} should satisfy $\mathbb{E}[|X_i|^2]\leq\E$.

For a complex constant value $l$, which is fixed during a symbol period $\Delta$,  define a channel distribution $P_{Y|X}$ such that
\be
\begin{split}\label{eqn:newB}
&P_{Y|X}(Y=0|X)=e^{-|X+l|^2},\\
&P_{Y|X}(Y=1|X)=1-e^{-|X+l|^2}.
\end{split}
\ee
When we define $I_l(P_X, P_{Y|X})$ as the mutual information between $X$ and $Y$ with input distribution $P_X$ and channel distribution $P_{Y|X}$ in~\eqref{eqn:newB}, the conditional mutual information $I(X_i;Y_i|Y_1^{i-1}=y_1^{i-1})$ with some input distribution $P_{X_i}$ and channel distribution~\eqref{eqn:newBB} is bounded above as
\be\label{eqn:singleI}
I(X_i;Y_i|Y_1^{i-1}=y_1^{i-1})\leq \max_{P_X, l} I_l(P_X, P_{Y|X}).
\ee
From~\eqref{eqn:chain+} and ~\eqref{eqn:singleI}, we obtain
\be
I(X_1^N;Y_1^N)\leq N \left(\max_{P_X, l} I_l(P_X, P_{Y|X})\right),
\ee
which implies 
\be
C_{\sf coherent}(\E)\leq \max_{P_X, l} I_l(P_X, P_{Y|X})
\ee
from~\eqref{eqn:ub_C_co}.

We next show that $\max_{P_X, l} I_l(P_X, P_{Y|X})$ is bounded above by
\be\label{eqn:core_ineq}
\max_{P_X, l} I_l(P_X, P_{Y|X})\leq \E\log \frac{1}{\E} -\E \log \log \frac{1}{\E} + O(\E).
\ee
To show this, we use the mathematical induction. We first show that for every binary input states, i.e., when $|\mathcal{X}|=2$, the bound~\eqref{eqn:core_ineq} holds.
We next assume that the bound~\eqref{eqn:core_ineq} holds when the input set $\mathcal{X}$ is constrained to have $L$ number of elements, i.e., when $|\mathcal{X}|=L$.
We then show that the same bound holds when $|\mathcal{X}|=L+1$. This will imply that the bound~\eqref{eqn:core_ineq} holds for any finite $|\mathcal{X}|$.

Let $R_L(\E)$ denote $\max_{P_X, l} I_l(P_X, P_{Y|X})$ under the constraint on the cardinality of the input set $|\mathcal{X}|=L$, i.e.,
\be\label{eqn:RLEdef}
R_L(\E):=\max_{|\mathcal{X|}=L}\left(\max_{P_X, l} I_l(P_X, P_{Y|X})\right).
\ee 
We first show that
\be\label{eqn:R2c}
R_2(\E)\leq \E\log \frac{1}{\E} -\E \log \log \frac{1}{\E} + O(\E)
\ee
for $|\mathcal{X}|=2$.
This bound can be implied by using Lemma 1 in~\cite{chung2016}.
Lemma 1 in~\cite{chung2016} shows that when binary-input coherent state with mean-photon-number constraint of $\E$ is detected by optimal single-symbol receiver measurement, which maximizes the mutual information of the induced channel, the resulting maximum mutual information is bounded above by the the right hand side of~\eqref{eqn:R2c}.  The coherent-processing receiver, which is composed of a mixture of a feedback signal followed by the direction-detection receiver, is a special case of the single-symbol receiver measurement. Therefore, Lemma 1 in~\cite{chung2016} implies that the bound in~\eqref{eqn:R2c} holds for every binary-input states of mean photon number $\E$ detected by the coherent-processing receiver. 

We next show that the same upper bound holds for $R_{L+1}$, i.e.,
\be\label{eqn:RL+1c}
R_{L+1}(\E) \leq \E\log \frac{1}{\E} -\E \log \log \frac{1}{\E} + O(\E),
\ee
when we assume that
\be\label{eqn:RLc}
R_{L}(\E) \leq \E\log \frac{1}{\E} -\E \log \log \frac{1}{\E} + O(\E).
\ee 



 
We first consider real-valued input signals, i.e., $X\in\mathcal{X}\subset\mathbb{R}$, and then later generalize the result for complex-valued input signals. 
For a fixed feedback control signal $l\in\mathbb{R}$ and the input set $\mathcal{X}=\{S_1',\dots, S_{L+1}'\}\subset\mathbb{R}$, without loss of generality, we can rearrange those $(L+1)$ amplitudes such that
\begin{equation}\label{eqn:assumption_M+1}
 |S_1+l|^2\leq \dots\leq |S_{L+1}+l|^2.
\end{equation}
 We denote the input distribution over the re-arranged input set $\{{S_1},\dots, {S_{L+1}}\}$ as $\{p_1,\cdots,p_{L+1}\}$, i.e., $\Pr(X=S_i)=p_i$.
 The resulting mutual information for the given input distribution and a fixed $l$ is 
 \be
 \begin{split}
& I_l(P_X, P_{Y|X})\\
& =H_{\sf B}\left(\sum_{i=1}^{L+1}p_i e^{-|S_i+l|^2}\right)-\sum_{i=1}^{L+1} p_i H_{\sf B}\left(e^{-|S_i+l|^2}\right)
 \end{split}
 \ee
 where the entropy $H_{\sf B}(p)$ for some Bernoulli random variable $Z\sim \text{Bernoulli}(p)$ is defined by
 \be
 H_{\sf B}(p)=-p\log p-(1-p)\log (1-p).
 \ee
Define a random variable $N_1$ based on $X$ such that
\be
\begin{split}
&N_1=\left\{
\begin{array}{l l}
0,& \quad \text{when } X\in\{S_1,\dots, S_L\},\\
1,& \quad  \text{when } X=S_{L+1},
\end{array}\right.
\end{split}
\ee
Since $N_1$ is deterministic given $X$, 
\begin{equation}\label{eqn:chain_RM}
I_{l}(P_X,P_{Y|X})=I(N_1,X;Y)=I(N_1;Y)+I(X;Y|N_1).
\end{equation}
We first find an upper bound on $I(X;Y|N_1)$.
Note that
\begin{equation}
\begin{split}\label{eqn:MI_RM}
&I(X;Y|N_1)\\
&=\left(\sum_{i=1}^L p_i\right) I(X;Y|N_1=0)+p_{L+1}I(X;Y|N_1=1)\\
&=\left(\sum_{i=1}^L p_i\right)\left(H_{\sf B} \left(\sum_{j=1}^L\frac{p_j}{\left(\sum_{i=1}^L p_i\right)}\cdot e^{-|S_j+l|^2}\right)\right.\\
& \left.\qquad\qquad\qquad\quad-\sum_{j=1}^L\frac{p_j}{\left(\sum_{i=1}^L p_i\right)} H_{\sf B}\left( e^{-|S_j+l|^2} \right)\right)\\
\end{split}
\end{equation}
since $ I(X;Y|N_1=1)=0$.
Let $\E_2$ denote the average number of {\it effective} photons used to encode the information in $X$ conditioned on  $N_1=0$:
\begin{equation}\label{eqn:E2}
\E_2=\sum_{j=1}^L\frac{p_j}{\left(\sum_{i=1}^L p_i\right)}\left|S_j-\overline{S}\right|^2
\end{equation}
where $\overline{S}=\sum_{i=1}^L \left({p_i\cdot S_i}\right)/\left({\sum_{i'=1}^L p_{i'}}\right)$ is the average amplitude of the input signal  $\{S_1,\dots, S_L\}$ with normalized probabilities $\{p_1/\left(\sum_{i'=1}^L p_{i'}\right),\dots, p_L/\left(\sum_{i'=1}^L p_{i'}\right)\}$ conditioned on $N_1=0$. When we calculate the average number of {\it effective} photons conditioned on $N_1=0$, we consider the amplitude $|S_i-\overline{S}|$ instead of $S_i$, since we can make a common offset to the signals $\{S_1,\dots, S_L\}$ by using the common control signal $l$ without any cost.
From (\ref{eqn:MI_RM}) and the definition of $R_L(\E)$ in~\eqref{eqn:RLEdef},
\begin{equation}\label{eqn:MI_RM1}
I(X;Y|N_1)\leq \left(\sum_{i=1}^L p_i\right)\cdot R_L(\E_2).
\end{equation}
We next find an upper bound on $I(N_1;Y)$ in~\eqref{eqn:chain_RM}.
Note that the input distribution $P_{N_1}$ is $\{\sum_{i=1}^L p_i,p_{L+1}\}$ and the channel distribution $P_{Y|N_1}$ is
\be
\begin{split}\label{channel11}
&P_{Y|N_1}(Y|N_1=0)\\
&=\left\{
\begin{array}{l l}
\sum_{j=1}^L \frac{p_j}{\left(\sum_{i=1}^L p_i\right)} e^{-|S_j+l|^2}& \quad \text{for }  Y=0,\\
1-\sum_{j=1}^L \frac{p_j}{\left(\sum_{i=1}^L p_i\right)} e^{-|S_j+l|^2}& \quad \text{for }  Y=1,
\end{array}\right.\\
&P_{Y|N_1}(Y|N_1=1)=\left\{
\begin{array}{l l}
e^{-|S_{L+1}+l|^2}& \quad \text{for }  Y=0,\\
1-e^{-|S_{L+1}+l|^2}& \quad \text{for }  Y=1,
\end{array}\right.
\end{split}
\ee
The corresponding mutual information between $N_1$ and $Y$ is
\begin{equation}
\begin{split}
I(N_1;Y)&=H_{\sf B}\left(\sum_{i=1}^{L+1}p_i\cdot e^{-|S_i+l|^2}\right)\\
&\quad-\left(\sum_{i=1}^L p_{i}\right)\cdot H_{\sf B}\left(\sum_{j=1}^L \frac{p_j}{\left(\sum_{i=1}^L p_i\right)} e^{-|S_j+l|^2}\right)\\
&\quad-p_{L+1}\cdot H_{\sf B}\left(e^{-|S_{L+1}+l|^2}\right).
\end{split}
\end{equation}
Define a new channel  distribution $Q_{Y|N_1}$ such that
\be
\begin{split}\label{channel22}
&Q_{Y|N_1}(Y|N_1=0)=\left\{
\begin{array}{l l}
e^{-|\overline{S}+l|^2}& \quad \text{for }  Y=0,\\
1-e^{-|\overline{S}+l|^2}& \quad \text{for }  Y=1,
\end{array}\right.\\
&Q_{Y|N_1}(Y|N_1=1)= P_{Y|N_1}(Y|N_1=1),\; Y\in\{0,1\}.
\end{split}
\ee
For this channel distribution, when $N_1=0$ a coherent state $\ket{\overline{S}}$ is transmitted where $\overline{S}=\sum_{i=1}^L \left({p_i\cdot S_i}\right)/\left({\sum_{i'=1}^L p_{i'}}\right)$, and when $N_1=1$ a coherent state $\ket{S_{L+1}}$ is transmitted. 
The average number $\E_1$ of photons to encode $N_1$ for this new channel equals
\begin{equation}\label{eqn:E1}
\E_1=\left(\sum_{i=1}^L p_i\right)\cdot |\overline{S}|^2+p_{L+1}\cdot |S_{L+1}|^2
\end{equation}
From the definition of $R_L(\E)$ in~\eqref{eqn:RLEdef}, the maximum mutual information between input $N_1$ and output $Y$ with the channel $Q_{Y|N_1}$ is bounded above by
\begin{equation}
I_l(P_{N_1},Q_{Y|N_1})\leq R_2 (\E_1).
\end{equation}
We next show that 
\begin{equation}\label{eqn:comp_p_q}
I_l(P_{N_1},P_{Y|N_1})\leq I_l(P_{N_1},Q_{Y|N_1})
\end{equation}
for any fixed $l$, which will imply that $I(N_1;Y)$ in (\ref{eqn:chain_RM}) is bonded above by
\begin{equation}\label{eqn:IN1Y}
I(N_1;Y)\leq R_2 (\E_1).
\end{equation}
To show~\eqref{eqn:comp_p_q}, we will use the following lemma.
  \begin{lemma}\label{lem:binary_ch_inequal}
  For a binary channel $W_{Y|X}$ with the binary input distribution $P_X$ such that $\{p_0,p_1\}$, let the binary-output channel distribution $W_{Y|X}(Y|X=1)$ be $\{t_1,1-t_1\}$ and $W_{Y|X}(Y|X=1)$  be $\{t_0,1-t_0\}$ for $t_0\geq t_1\geq 0$. Let $f(t_0)$ denote the mutual information $I(P_X,W_{Y|X})$ for a fixed $(t_1,p_0,p_1)$ as a function of $t_0$. Then, $f(t_0)$ decreases monotonically  as $t_0$ decreases and approaches $t_1$.
  \end{lemma}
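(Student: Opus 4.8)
The plan is to write the mutual information explicitly as a function of $t_0$, differentiate once, and read off the sign of the derivative. Let $\bar t := p_0 t_0 + p_1 t_1$ denote the induced output probability $\Pr(Y=0)$. Using the decomposition $I(X;Y) = H(Y) - H(Y|X)$, the mutual information is
\be
f(t_0) = H_{\sf B}(\bar t) - p_0 H_{\sf B}(t_0) - p_1 H_{\sf B}(t_1),
\ee
where only the first two terms depend on $t_0$ and $d\bar t / dt_0 = p_0$.

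Next I would differentiate. Since $H_{\sf B}'(p) = \log\frac{1-p}{p}$, the chain rule gives
\be
f'(t_0) = p_0\left(\log\frac{1-\bar t}{\bar t} - \log\frac{1-t_0}{t_0}\right) = p_0 \log\frac{t_0(1-\bar t)}{\bar t(1-t_0)}.
\ee
Because $p_0 \geq 0$ and $\log$ is increasing, the sign of $f'(t_0)$ equals the sign of $t_0(1-\bar t) - \bar t(1-t_0) = t_0 - \bar t$. The key step is then the observation that $\bar t = p_0 t_0 + p_1 t_1$ is a convex combination of $t_0$ and $t_1$ with $t_1 \leq t_0$, so $\bar t \leq t_0$ and hence $f'(t_0) \geq 0$. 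This says $f$ is nondecreasing in $t_0$, which is exactly the claim that $f(t_0)$ decreases monotonically as $t_0$ decreases toward $t_1$. To pin down the limiting value, I would note that at $t_0 = t_1$ the two conditional output distributions coincide, so $\bar t = t_1$ and $f(t_1) = H_{\sf B}(t_1) - (p_0 + p_1) H_{\sf B}(t_1) = 0$, i.e. input and output become independent.

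I do not expect a genuine obstacle here: the entire argument reduces to a one-line monotonicity fact once the derivative is computed. The only points requiring a little care are (i) making sure the stated direction of monotonicity (``$f$ decreases as $t_0$ decreases'') is correctly matched to $f'(t_0) \geq 0$, and (ii) the endpoints $t_0 \in \{0,1\}$, where $H_{\sf B}'$ is singular; there I would argue by continuity, noting that $f$ extends continuously to the closed interval while the inequality $\bar t \leq t_0$ persists, so the monotonicity conclusion is unaffected.
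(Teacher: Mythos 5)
Your proof is correct, but it takes a genuinely different route from the paper. You differentiate: writing $f(t_0) = H_{\sf B}(\bar t) - p_0 H_{\sf B}(t_0) - p_1 H_{\sf B}(t_1)$ with $\bar t = p_0 t_0 + p_1 t_1$, you get $f'(t_0) = p_0\log\bigl(t_0(1-\bar t)/(\bar t(1-t_0))\bigr)$, whose sign is that of $t_0 - \bar t \geq 0$ because $\bar t$ is a convex combination of $t_0$ and $t_1 \le t_0$. The paper instead avoids calculus entirely: it observes that for $t_1 \le t_2 \le t_0$ the channel matrix $W_{t_2}$ is a convex combination $r\,W_{t_0} + (1-r)\,W_{t_1}$, invokes the standard fact that $I(P_X, W)$ is convex in $W$ for fixed $P_X$, and uses $f(t_1)=0$ to conclude $f(t_2) \le r f(t_0) \le f(t_0)$. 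Both arguments establish the same (weak) monotonicity. Your approach buys an explicit derivative, hence a sharper picture of where the monotonicity is strict (whenever $p_1 > 0$ and $t_0 > t_1$), at the cost of the endpoint care you correctly flag at $t_0 \in \{0,1\}$ where $H_{\sf B}'$ blows up; the paper's approach is shorter, needs no smoothness or endpoint discussion, and additionally exhibits $f$ as convex on $[t_1,1]$, but it leans on the convexity-in-the-channel theorem as a black box. One small point of care: the lemma statement has a typo (it assigns both rows to $X=1$), and your reading---$p_0$ paired with the row $(t_0, 1-t_0)$---is the one consistent with how the paper applies the lemma to $P_{Y|N_1}$ and $Q_{Y|N_1}$, so your $\bar t = p_0 t_0 + p_1 t_1$ is the intended output distribution.
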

  \begin{proof}For a fixed $t_1$, let us denote the channel distribution $W_{Y|X}$ as a function of $t_0$ by a matrix $W_{t_0}:=
\left( \begin{array}{cc}
t_0 & 1-t_0  \\
t_1 & 1-t_1 \end{array} \right)$. For $t_2$ such that $t_0\geq t_2\geq t_1$,  there exists $ r \in [0,1)$ such that $r\cdot W_{t_0}+(1-r)\cdot W_{t_1}=W_{t_2}$. Since mutual information $I(P_X,W_{Y|X})$ is convex in $W_{Y|X}$ for a fixed $P_X$, $f(t_0)$ is also convex in $t_0$. Therefore, $r\cdot f(t_0)+(1-r)\cdot f(t_1)\geq f(t_2)$. Since $f(t_1)=0$, the convexity gives $f(t_0)\geq  r\cdot f(t_0)\geq f(t_2)$ for any $(t_0,t_2)$ such that $1\geq t_0\geq t_2\geq t_1\geq 0$. This implies that $f(t_0)$ decreases monotonically as $t_0(>t_1)$ decreases and approaches $t_1$.
  \end{proof}
  
 For $P_{Y|N_1}$ in~\eqref{channel11} and $Q_{Y|N_1}$ in~\eqref{channel22}, if we show 
  \begin{equation}\label{eqn:inequl_3}
 e^{-|S_{L+1}+l|^2} \leq \sum_{j=1}^L \frac{p_j}{\left(\sum_{i=1}^L p_i\right)} e^{-|S_j+l|^2} \leq e^{-|\overline{S}+l|^2},
 \end{equation}
 Lemma \ref{lem:binary_ch_inequal}  implies (\ref{eqn:comp_p_q}).
 In (\ref{eqn:inequl_3}), the first inequality is valid from the ordering of $\{S_1,\dots,S_{L+1}\}$ that satisfies (\ref{eqn:assumption_M+1}). The second inequality is also valid since $e^{-|x+l|^2}$ is concave in $x$ when $|x+l|^2\leq 1/2$, and  $|S_j+l|^2$ for $j=1,\dots, L$ as well as $|\overline{S}+l|^2$, which are the mean photon number received per channel use for each input signal $S_j$ and $\overline{S}$, respectively,  are sufficiently small  due to our assumption of  very short symbol period $\Delta\to 0$.
Therefore,~\eqref{eqn:inequl_3} is valid and by using Lemma \ref{lem:binary_ch_inequal} we can show~\eqref{eqn:comp_p_q}, which implies~\eqref{eqn:IN1Y}. By plugging the upper bounds on $I(N_1;Y)$ in (\ref{eqn:IN1Y}) and on $I(X;Y|N_1)$ in (\ref{eqn:MI_RM1}) into~\eqref{eqn:chain_RM}, we obtain
\begin{equation}\label{eqn:chain_E1_E2}
I_{l}(P_X,P_{Y|X})\leq R_2(\E_1)+ \left(\sum_{i=1}^L p_i\right)\cdot R_L(\E_2)
\end{equation}
where $\E_1$ and $\E_2$ are defined as (\ref{eqn:E1}) and (\ref{eqn:E2}), respectively.
Moreover, it can be shown that 
\begin{equation}
\begin{split}
&\E_1+ \left(\sum_{i=1}^L p_i\right) \cdot \E_2\\
&=\sum_{i=1}^L p_i \left|S_i-\overline{S}\right|^2+\left(\sum_{i=1}^L p_i\right)|\overline{S}|^2+p_{L+1} |S_{L+1}|^2\\
&=\left(\sum_{i=1}^L p_i\right) \left(2|\overline{S}|^2+|S_i|^2-2S_i\overline{S}\right)+p_{L+1} |S_{L+1}|^2\\
&= \sum_{i=1}^{L+1} p_i |S_i|^2=\E.
\end{split}
\end{equation}
When we denote $\E_1=(1-\alpha)  \E$ and $\E_2=\alpha\E/\beta$ for some $\alpha\in(0,1)$ and $\beta:=\left(\sum_{i=1}^L p_i\right)<1$, the upper bound on $I_{l}(P_X,P_{Y|X})$ in (\ref{eqn:chain_E1_E2}) becomes
\begin{equation}\label{eqn:D32}
I_{l}(P_X,P_{Y|X})\leq R_2\left((1-\alpha) \E\right)+\beta R_L\left(\alpha\cdot\E/\beta\right).
\end{equation}
From (\ref{eqn:R2c}) and the assumption (\ref{eqn:RLc}), $I_{l}(P_X,P_{Y|X})$ in the bound~\eqref{eqn:D32} can be further bounded above as
\begin{equation}
\begin{split}\label{eqn:D34}
&I_{l}(P_X,P_{Y|X})\\
&\leq \left((1-\alpha)\E\right)\log\frac{1}{\left((1-\alpha)\E\right)}\\
&\quad-\left((1-\alpha)\E\right)\log\log\frac{1}{\left((1-\alpha)\E\right)}\\
&\quad +\beta \cdot \left(\left(\alpha\cdot\E/\beta\right) \log\frac{1}{\left(\alpha\cdot\E/\beta\right)}\right.\\
&\qquad\qquad\left.-\left(\alpha\cdot\E/\beta\right)\log\log\frac{1}{\left(\alpha\cdot\E/\beta\right)}\right)+O(\E)\\
&\leq\E\log\frac{1}{\E}-\E\log\log\frac{1}{\E}+O(\E)
\end{split}
\end{equation}
for any $0<\alpha,\beta<1$.
This inequality holds for every input set $\mathcal{X}$ of $(L+1)$ real-valued elements with $per_i>0$, for $i=1,\dots, L+1$, under the mean-photon-number constraint of $\E$, regardless of the choice of the control signal $l\in\mathbb{R}$. 

We next extend this result for complex-valued input signals with mean photon number $\E$. Let $\E_{\sf R}$ denote the mean photon number of complex-valued coherent state embedded in real part of the signal, and  $\E_{\sf I}$ be that embedded in imaginary part of the signal. Then, $\E_{\sf R}$ and $\E_{\sf I}$ should satisfy $\E_{\sf R}+\E_{\sf I}=\E$. 
For the optical channel of interest, which is generated by the coherent receiver, when the  input coherent state $\ket{S}$ with complex-field amplitude $S\in\mathbb{C}$ is mixed with a local control signal to generate $\ket{S+l}$ for some $l\in\mathbb{C}$, the resulting channel output follows Poisson process of rate $|S+l|^2=(\mathrm{Re}(S+l))^2+(\mathrm{Im}(S+l))^2$. 
Moreover, this output Poisson process can be decomposed into two independent Poisson processes of rate $(\mathrm{Re}(S+l))^2$ and $(\mathrm{Im}(S+l))^2$, respectively.
Therefore, the capacity of the optical channel with complex-valued coherent states of mean photon number $\E$ is equal to the sum of capacities of two optical channels, whose inputs are real-valued coherent states satisfying the constraints on mean photon numbers, $\E_{\sf R}$ and $\E_{\sf I}$, respectively. By using the upper bound~\eqref{eqn:D34} on the capacity of the optical channel with real-valued arbitrary $(L+1)$ inputs, we can bound the maximum capacity $R_{L+1}(\E)$ with arbitrary $(L+1)$-complex-valued coherent states as
\be
\begin{split}\label{eqn:D35}
R_{L+1}(\E)&\le \E_{\sf R}\log\frac{1}{\E_{\sf R}}-\E_{\sf R}\log\log\frac{1}{\E_{\sf R}}+O(\E_{\sf R})\\
&\quad + \E_{\sf I}\log\frac{1}{\E_{\sf I}}-\E_{\sf I}\log\log\frac{1}{\E_{\sf I}}+O(\E_{\sf I}).
\end{split}
\ee
By using the fact that $\E_{\sf R}+\E_{\sf I}=\E$, we can show that the bound~\eqref{eqn:D35} can be written as
\begin{equation}
R_{L+1}(\E)\leq \E\log\frac{1}{\E}-\E\log\log\frac{1}{\E}+O(\E)
\end{equation}
as $\E\to 0$.
Finally, by mathematical induction,~\eqref{eqn:core_ineq} is true for any input set $\mathcal{X}\subset\mathbb{C}$ with finite cardinality. This completes the proof of Theorem \ref{thm:cap_coh}.

\bibliographystyle{apsrev4-1} 

\begin{thebibliography}{25}%
\makeatletter
\providecommand \@ifxundefined [1]{%
 \@ifx{#1\undefined}
}%
\providecommand \@ifnum [1]{%
 \ifnum #1\expandafter \@firstoftwo
 \else \expandafter \@secondoftwo
 \fi
}%
\providecommand \@ifx [1]{%
 \ifx #1\expandafter \@firstoftwo
 \else \expandafter \@secondoftwo
 \fi
}%
\providecommand \natexlab [1]{#1}%
\providecommand \enquote  [1]{``#1''}%
\providecommand \bibnamefont  [1]{#1}%
\providecommand \bibfnamefont [1]{#1}%
\providecommand \citenamefont [1]{#1}%
\providecommand \href@noop [0]{\@secondoftwo}%
\providecommand \href [0]{\begingroup \@sanitize@url \@href}%
\providecommand \@href[1]{\@@startlink{#1}\@@href}%
\providecommand \@@href[1]{\endgroup#1\@@endlink}%
\providecommand \@sanitize@url [0]{\catcode `\\12\catcode `\$12\catcode
  `\&12\catcode `\#12\catcode `\^12\catcode `\_12\catcode `\%12\relax}%
\providecommand \@@startlink[1]{}%
\providecommand \@@endlink[0]{}%
\providecommand \url  [0]{\begingroup\@sanitize@url \@url }%
\providecommand \@url [1]{\endgroup\@href {#1}{\urlprefix }}%
\providecommand \urlprefix  [0]{URL }%
\providecommand \Eprint [0]{\href }%
\providecommand \doibase [0]{http://dx.doi.org/}%
\providecommand \selectlanguage [0]{\@gobble}%
\providecommand \bibinfo  [0]{\@secondoftwo}%
\providecommand \bibfield  [0]{\@secondoftwo}%
\providecommand \translation [1]{[#1]}%
\providecommand \BibitemOpen [0]{}%
\providecommand \bibitemStop [0]{}%
\providecommand \bibitemNoStop [0]{.\EOS\space}%
\providecommand \EOS [0]{\spacefactor3000\relax}%
\providecommand \BibitemShut  [1]{\csname bibitem#1\endcsname}%
\let\auto@bib@innerbib\@empty
\bibitem [{\citenamefont {Chung}\ \emph
  {et~al.}(2011{\natexlab{a}})\citenamefont {Chung}, \citenamefont {Guha},\
  and\ \citenamefont {Zheng}}]{chung2011capacityISIT}%
  \BibitemOpen
  \bibfield  {author} {\bibinfo {author} {\bibfnamefont {H.~W.}\ \bibnamefont
  {Chung}}, \bibinfo {author} {\bibfnamefont {S.}~\bibnamefont {Guha}}, \ and\
  \bibinfo {author} {\bibfnamefont {L.}~\bibnamefont {Zheng}},\ }in\ \href@noop
  {} {\emph {\bibinfo {booktitle} {2011 IEEE International Symposium on
  Information Theory Proceedings (ISIT)}}}\ (\bibinfo {organization} {IEEE},\
  \bibinfo {year} {2011})\ pp.\ \bibinfo {pages} {284--288}\BibitemShut
  {NoStop}%
\bibitem [{\citenamefont {Chung}\ \emph
  {et~al.}(2011{\natexlab{b}})\citenamefont {Chung}, \citenamefont {Guha},\
  and\ \citenamefont {Zheng}}]{chung2011capacity}%
  \BibitemOpen
  \bibfield  {author} {\bibinfo {author} {\bibfnamefont {H.~W.}\ \bibnamefont
  {Chung}}, \bibinfo {author} {\bibfnamefont {S.}~\bibnamefont {Guha}}, \ and\
  \bibinfo {author} {\bibfnamefont {L.}~\bibnamefont {Zheng}},\ }in\ \href@noop
  {} {\emph {\bibinfo {booktitle} {2011 49th Annual Allerton Conference on
  Communication, Control, and Computing (Allerton)}}}\ (\bibinfo {organization}
  {IEEE},\ \bibinfo {year} {2011})\ pp.\ \bibinfo {pages}
  {879--885}\BibitemShut {NoStop}%
\bibitem [{\citenamefont {Shamai}(1990)}]{Shamai90}%
  \BibitemOpen
  \bibfield  {author} {\bibinfo {author} {\bibfnamefont {S.~S.}\ \bibnamefont
  {Shamai}},\ }\href@noop {} {\bibfield  {journal} {\bibinfo  {journal} {IEE
  Proceedings I (Communications, Speech and Vision)}\ }\textbf {\bibinfo
  {volume} {137}},\ \bibinfo {pages} {424} (\bibinfo {year}
  {1990})}\BibitemShut {NoStop}%
\bibitem [{\citenamefont {Wyner}(1988)}]{Wyner88_1}%
  \BibitemOpen
  \bibfield  {author} {\bibinfo {author} {\bibfnamefont {A.~D.}\ \bibnamefont
  {Wyner}},\ }\href@noop {} {\bibfield  {journal} {\bibinfo  {journal} {IEEE
  Transactions on Information Theory}\ }\textbf {\bibinfo {volume} {34}},\
  \bibinfo {pages} {1462} (\bibinfo {year} {1988})}\BibitemShut {NoStop}%
\bibitem [{\citenamefont {Wang}\ and\ \citenamefont
  {Wornell}(2014)}]{wang2014refined}%
  \BibitemOpen
  \bibfield  {author} {\bibinfo {author} {\bibfnamefont {L.}~\bibnamefont
  {Wang}}\ and\ \bibinfo {author} {\bibfnamefont {G.~W.}\ \bibnamefont
  {Wornell}},\ }\href@noop {} {\bibfield  {journal} {\bibinfo  {journal} {IEEE
  Transactions on Information Theory}\ }\textbf {\bibinfo {volume} {60}},\
  \bibinfo {pages} {4299} (\bibinfo {year} {2014})}\BibitemShut {NoStop}%
\bibitem [{\citenamefont {Dolinar}(1973)}]{DOL73}%
  \BibitemOpen
  \bibfield  {author} {\bibinfo {author} {\bibfnamefont {S.~J.}\ \bibnamefont
  {Dolinar}},\ }\href@noop {} {\bibfield  {journal} {\bibinfo  {journal} {MIT
  Research Laboratory of Electronics Quarterly Progress Report}\ }\textbf
  {\bibinfo {volume} {111}},\ \bibinfo {pages} {115} (\bibinfo {year}
  {1973})}\BibitemShut {NoStop}%
\bibitem [{\citenamefont {Yuen}\ \emph {et~al.}(1975)\citenamefont {Yuen},
  \citenamefont {Kennedy},\ and\ \citenamefont {Lax}}]{YKL1975}%
  \BibitemOpen
  \bibfield  {author} {\bibinfo {author} {\bibfnamefont {H.~P.}\ \bibnamefont
  {Yuen}}, \bibinfo {author} {\bibfnamefont {R.~S.}\ \bibnamefont {Kennedy}}, \
  and\ \bibinfo {author} {\bibfnamefont {M.}~\bibnamefont {Lax}},\ }\href@noop
  {} {\bibfield  {journal} {\bibinfo  {journal} {IEEE Transactions on
  Information Theory}\ }\textbf {\bibinfo {volume} {IT-21}},\ \bibinfo {pages}
  {125134} (\bibinfo {year} {1975})}\BibitemShut {NoStop}%
\bibitem [{\citenamefont {Helstrom}\ \emph {et~al.}(1976)\citenamefont
  {Helstrom} \emph {et~al.}}]{Hel76}%
  \BibitemOpen
  \bibfield  {author} {\bibinfo {author} {\bibfnamefont {C.~W.}\ \bibnamefont
  {Helstrom}} \emph {et~al.},\ }\href@noop {} {\emph {\bibinfo {title} {Quantum
  detection and estimation theory}}},\ Vol.~\bibinfo {volume} {84}\ (\bibinfo
  {publisher} {Academic press New York},\ \bibinfo {year} {1976})\BibitemShut
  {NoStop}%
\bibitem [{\citenamefont {da~Silva}\ \emph {et~al.}(2013)\citenamefont
  {da~Silva}, \citenamefont {Guha},\ and\ \citenamefont {Dutton}}]{Silva2013}%
  \BibitemOpen
  \bibfield  {author} {\bibinfo {author} {\bibfnamefont {M.~P.}\ \bibnamefont
  {da~Silva}}, \bibinfo {author} {\bibfnamefont {S.}~\bibnamefont {Guha}}, \
  and\ \bibinfo {author} {\bibfnamefont {Z.}~\bibnamefont {Dutton}},\
  }\href@noop {} {\bibfield  {journal} {\bibinfo  {journal} {Physical Review
  A}\ }\textbf {\bibinfo {volume} {87}},\ \bibinfo {pages} {052320} (\bibinfo
  {year} {2013})}\BibitemShut {NoStop}%
\bibitem [{\citenamefont {Holevo}(1998{\natexlab{a}})}]{holevo1998quantum}%
  \BibitemOpen
  \bibfield  {author} {\bibinfo {author} {\bibfnamefont {A.~S.}\ \bibnamefont
  {Holevo}},\ }\href@noop {} {\bibfield  {journal} {\bibinfo  {journal}
  {Russian Mathematical Surveys}\ }\textbf {\bibinfo {volume} {53}},\ \bibinfo
  {pages} {1295} (\bibinfo {year} {1998}{\natexlab{a}})}\BibitemShut {NoStop}%
\bibitem [{Note1()}]{Note1}%
  \BibitemOpen
  \bibinfo {note} {One has to be careful in using the binary-output channel as
  an approximation of the Poisson channel. As we are optimizing over the
  control signal, it is not obvious that the resulting $\lambda _i$'s are
  bounded. In other words, the mean of the Poisson distributions, $\lambda _i
  \Delta $, might not be small. The assumption of either $0$ or $1$ arrival,
  and the approximation in the corresponding probabilities, can be justified as
  follows. First, a single photon detector is much more practical, given the
  current state of technology, that a fully number-resolving high bandwidth
  photon counter. A single photon detector can sense whether or not any number
  of photons arrives during a time interval $\Delta $, but cannot count the
  number of photon arrivals, especially as $\Delta \to 0$. So, the
  binary-output channel model is much more practical than the Poisson-output
  channel model. Second, when we want to maximize the ability to distinguish
  between two hypotheses $H=0, 1$, we essentially need to distinguish between
  the signal amplitudes $S_0$ and $S_1$ using photon arrival events. Adding a
  feedback control signal $l\to \infty $ does not help in distinguishing $S_0$
  and $S_1$. In this sense, we can reason that the optimal $l$ should not make
  $\lambda _i$ unbounded.}\BibitemShut {Stop}%
\bibitem [{\citenamefont {Takeoka}(2007)}]{Takeoka2007}%
  \BibitemOpen
  \bibfield  {author} {\bibinfo {author} {\bibfnamefont {M.}~\bibnamefont
  {Takeoka}},\ }\href@noop {} {\bibfield  {journal} {\bibinfo  {journal}
  {Optics and Spectroscopy}\ }\textbf {\bibinfo {volume} {103}},\ \bibinfo
  {pages} {98} (\bibinfo {year} {2007})}\BibitemShut {NoStop}%
\bibitem [{\citenamefont {Walgate}\ \emph {et~al.}(2000)\citenamefont
  {Walgate}, \citenamefont {Short}, \citenamefont {Hardy},\ and\ \citenamefont
  {Vedral}}]{Walgate2000}%
  \BibitemOpen
  \bibfield  {author} {\bibinfo {author} {\bibfnamefont {J.}~\bibnamefont
  {Walgate}}, \bibinfo {author} {\bibfnamefont {A.~J.}\ \bibnamefont {Short}},
  \bibinfo {author} {\bibfnamefont {L.}~\bibnamefont {Hardy}}, \ and\ \bibinfo
  {author} {\bibfnamefont {V.}~\bibnamefont {Vedral}},\ }\href@noop {}
  {\bibfield  {journal} {\bibinfo  {journal} {Physical Review Letters}\
  }\textbf {\bibinfo {volume} {85}},\ \bibinfo {pages} {4972} (\bibinfo {year}
  {2000})}\BibitemShut {NoStop}%
\bibitem [{\citenamefont {Holevo}(1998{\natexlab{b}})}]{Holevo98}%
  \BibitemOpen
  \bibfield  {author} {\bibinfo {author} {\bibfnamefont {A.~S.}\ \bibnamefont
  {Holevo}},\ }\href {\doibase 10.1109/18.651037} {\bibfield  {journal}
  {\bibinfo  {journal} {IEEE Transactions on Information Theory}\ }\textbf
  {\bibinfo {volume} {44}},\ \bibinfo {pages} {269} (\bibinfo {year}
  {1998}{\natexlab{b}})}\BibitemShut {NoStop}%
\bibitem [{\citenamefont {Schumacher}\ and\ \citenamefont
  {Westmoreland}(1997)}]{Schumacher97}%
  \BibitemOpen
  \bibfield  {author} {\bibinfo {author} {\bibfnamefont {B.}~\bibnamefont
  {Schumacher}}\ and\ \bibinfo {author} {\bibfnamefont {M.~D.}\ \bibnamefont
  {Westmoreland}},\ }\href@noop {} {\bibfield  {journal} {\bibinfo  {journal}
  {Physical Review A}\ }\textbf {\bibinfo {volume} {56}},\ \bibinfo {pages}
  {131} (\bibinfo {year} {1997})}\BibitemShut {NoStop}%
\bibitem [{\citenamefont {Giovannetti}\ \emph {et~al.}(2004)\citenamefont
  {Giovannetti}, \citenamefont {Guha}, \citenamefont {Lloyd}, \citenamefont
  {Maccone}, \citenamefont {Shapiro},\ and\ \citenamefont
  {Yuen}}]{giovannetti2004classical}%
  \BibitemOpen
  \bibfield  {author} {\bibinfo {author} {\bibfnamefont {V.}~\bibnamefont
  {Giovannetti}}, \bibinfo {author} {\bibfnamefont {S.}~\bibnamefont {Guha}},
  \bibinfo {author} {\bibfnamefont {S.}~\bibnamefont {Lloyd}}, \bibinfo
  {author} {\bibfnamefont {L.}~\bibnamefont {Maccone}}, \bibinfo {author}
  {\bibfnamefont {J.~H.}\ \bibnamefont {Shapiro}}, \ and\ \bibinfo {author}
  {\bibfnamefont {H.~P.}\ \bibnamefont {Yuen}},\ }\href@noop {} {\bibfield
  {journal} {\bibinfo  {journal} {Physical Review Letters}\ }\textbf {\bibinfo
  {volume} {92}},\ \bibinfo {pages} {027902} (\bibinfo {year}
  {2004})}\BibitemShut {NoStop}%
\bibitem [{\citenamefont {Lapidoth}\ \emph {et~al.}(2011)\citenamefont
  {Lapidoth}, \citenamefont {Shapiro}, \citenamefont {Venkatesan},\ and\
  \citenamefont {Wang}}]{Ligong08}%
  \BibitemOpen
  \bibfield  {author} {\bibinfo {author} {\bibfnamefont {A.}~\bibnamefont
  {Lapidoth}}, \bibinfo {author} {\bibfnamefont {J.~H.}\ \bibnamefont
  {Shapiro}}, \bibinfo {author} {\bibfnamefont {V.}~\bibnamefont {Venkatesan}},
  \ and\ \bibinfo {author} {\bibfnamefont {L.}~\bibnamefont {Wang}},\
  }\href@noop {} {\bibfield  {journal} {\bibinfo  {journal} {IEEE Transactions
  on Information Theory}\ }\textbf {\bibinfo {volume} {57}},\ \bibinfo {pages}
  {3260} (\bibinfo {year} {2011})}\BibitemShut {NoStop}%
\bibitem [{\citenamefont {Verd{\'u}}(2002)}]{Verdu02}%
  \BibitemOpen
  \bibfield  {author} {\bibinfo {author} {\bibfnamefont {S.}~\bibnamefont
  {Verd{\'u}}},\ }\href@noop {} {\bibfield  {journal} {\bibinfo  {journal}
  {IEEE Transactions on Information Theory}\ }\textbf {\bibinfo {volume}
  {48}},\ \bibinfo {pages} {1319} (\bibinfo {year} {2002})}\BibitemShut
  {NoStop}%
\bibitem [{\citenamefont {Zheng}\ \emph {et~al.}(2007)\citenamefont {Zheng},
  \citenamefont {Tse},\ and\ \citenamefont {M{\'e}dard}}]{ZTM03}%
  \BibitemOpen
  \bibfield  {author} {\bibinfo {author} {\bibfnamefont {L.}~\bibnamefont
  {Zheng}}, \bibinfo {author} {\bibfnamefont {D.~N.}\ \bibnamefont {Tse}}, \
  and\ \bibinfo {author} {\bibfnamefont {M.}~\bibnamefont {M{\'e}dard}},\
  }\href@noop {} {\bibfield  {journal} {\bibinfo  {journal} {IEEE Transactions
  on Information Theory}\ }\textbf {\bibinfo {volume} {53}},\ \bibinfo {pages}
  {976} (\bibinfo {year} {2007})}\BibitemShut {NoStop}%
\bibitem [{\citenamefont {Reck}\ \emph {et~al.}(1994)\citenamefont {Reck},
  \citenamefont {Zeilinger}, \citenamefont {Bernstein},\ and\ \citenamefont
  {Bertani}}]{Reck1994}%
  \BibitemOpen
  \bibfield  {author} {\bibinfo {author} {\bibfnamefont {M.}~\bibnamefont
  {Reck}}, \bibinfo {author} {\bibfnamefont {A.}~\bibnamefont {Zeilinger}},
  \bibinfo {author} {\bibfnamefont {H.~J.}\ \bibnamefont {Bernstein}}, \ and\
  \bibinfo {author} {\bibfnamefont {P.}~\bibnamefont {Bertani}},\ }\href@noop
  {} {\bibfield  {journal} {\bibinfo  {journal} {Physical Review Letters}\
  }\textbf {\bibinfo {volume} {73}},\ \bibinfo {pages} {58} (\bibinfo {year}
  {1994})}\BibitemShut {NoStop}%
\bibitem [{\citenamefont {Guha}(2011)}]{Guh10}%
  \BibitemOpen
  \bibfield  {author} {\bibinfo {author} {\bibfnamefont {S.}~\bibnamefont
  {Guha}},\ }\href@noop {} {\bibfield  {journal} {\bibinfo  {journal} {Physical
  Review Letters}\ }\textbf {\bibinfo {volume} {106}},\ \bibinfo {pages}
  {240502} (\bibinfo {year} {2011})}\BibitemShut {NoStop}%
\bibitem [{\citenamefont {Rosati}\ \emph {et~al.}(2017)\citenamefont {Rosati},
  \citenamefont {Mari},\ and\ \citenamefont {Giovannetti}}]{rosati2017}%
  \BibitemOpen
  \bibfield  {author} {\bibinfo {author} {\bibfnamefont {M.}~\bibnamefont
  {Rosati}}, \bibinfo {author} {\bibfnamefont {A.}~\bibnamefont {Mari}}, \ and\
  \bibinfo {author} {\bibfnamefont {V.}~\bibnamefont {Giovannetti}},\
  }\href@noop {} {\bibfield  {journal} {\bibinfo  {journal} {arXiv preprint
  arXiv:1703.05701}\ } (\bibinfo {year} {2017})}\BibitemShut {NoStop}%
\bibitem [{\citenamefont {Sohma}\ and\ \citenamefont
  {Hirota}(2000)}]{sohma2000binary}%
  \BibitemOpen
  \bibfield  {author} {\bibinfo {author} {\bibfnamefont {M.}~\bibnamefont
  {Sohma}}\ and\ \bibinfo {author} {\bibfnamefont {O.}~\bibnamefont {Hirota}},\
  }\href@noop {} {\bibfield  {journal} {\bibinfo  {journal} {Physical Review
  A}\ }\textbf {\bibinfo {volume} {62}},\ \bibinfo {pages} {052312} (\bibinfo
  {year} {2000})}\BibitemShut {NoStop}%
\bibitem [{\citenamefont {Cover}\ and\ \citenamefont
  {Thomas}(2012)}]{cover2012elements}%
  \BibitemOpen
  \bibfield  {author} {\bibinfo {author} {\bibfnamefont {T.~M.}\ \bibnamefont
  {Cover}}\ and\ \bibinfo {author} {\bibfnamefont {J.~A.}\ \bibnamefont
  {Thomas}},\ }\href@noop {} {\emph {\bibinfo {title} {Elements of information
  theory}}}\ (\bibinfo  {publisher} {John Wiley \& Sons},\ \bibinfo {year}
  {2012})\BibitemShut {NoStop}%
\bibitem [{\citenamefont {Chung}\ \emph {et~al.}(2016)\citenamefont {Chung},
  \citenamefont {Guha},\ and\ \citenamefont {Zheng}}]{chung2016}%
  \BibitemOpen
  \bibfield  {author} {\bibinfo {author} {\bibfnamefont {H.~W.}\ \bibnamefont
  {Chung}}, \bibinfo {author} {\bibfnamefont {S.}~\bibnamefont {Guha}}, \ and\
  \bibinfo {author} {\bibfnamefont {L.}~\bibnamefont {Zheng}},\ }\href@noop {}
  {\bibfield  {journal} {\bibinfo  {journal} {IEEE Transactions on Information
  Theory}\ }\textbf {\bibinfo {volume} {62}},\ \bibinfo {pages} {5938}
  (\bibinfo {year} {2016})}\BibitemShut {NoStop}%
\end{thebibliography}
%

\end{document}